\documentclass[11pt]{article}

\usepackage {a4}

\usepackage [utf8]{inputenc}
\usepackage [english]{babel}
\usepackage{enumerate}
\usepackage{fullpage}
\usepackage{bbm}
\usepackage{braket}

\usepackage{graphicx}
\usepackage{amsmath,amsthm,amssymb,amsfonts,setspace}
\usepackage{bm}
\usepackage{mathtools}
\usepackage{braket}
\usepackage{float}
\usepackage{color}
\usepackage[usenames,dvipsnames]{xcolor}
\usepackage{hyperref}  
\usepackage[capitalize,nameinlink,noabbrev]{cleveref}
\usepackage{wrapfig}
\usepackage{tikz}
\usepackage[left=1in,right=1in,top=1in,bottom=1in]{geometry}

\hypersetup{colorlinks=true,urlcolor=[rgb]{0,0,0.5},citecolor=[rgb]{0.5,0,0},linkcolor=[rgb]{0,0,0.4}}

\DeclareUnicodeCharacter{00A0}{ }

\newcommand{\id}{\mathbbm{1}}

\newcommand{\norm}[1]{\left\Vert #1 \right\Vert}

\newcommand{\C}{\mathbb{C}}
\newcommand{\Pp}{\mathbb{P}}
\newcommand{\E}{\mathbb{E}}

\newcommand{\FF}{\mathcal{F}}

\newcommand{\supp}{\mathrm{supp}}

\newcommand{\phinet}{\Phi_{\mathrm{net}}}

\newcommand{\channels}{\mathrm{C}}
\newcommand{\operators}{\mathrm{B}}
\newcommand{\erasure}{\mathcal{E}}

\newcommand{\tr}{\mathrm{tr}}
\newcommand{\ketbra}[2]{\ket{#1}\!\!\bra{#2}}

\newcommand{\Cov}{\mathrm{Cov}}
\newcommand{\junk}{\mathrm{junk}}
\newcommand{\Span}{\mathrm{span}}
\newcommand{\intervalint}[1]{{{\left[\hspace{-1.4pt}\left[#1\right]\hspace{-1.4pt}\right]}}} 

\newtheorem{theorem}{Theorem}[section]
\newtheorem{definition}{Definition}[section]
\newtheorem{lemma}{Lemma}[section]
\newtheorem{proposition}{Proposition}[section]
\newtheorem{remark}{Remark}[section]
\newtheorem{problem}{Problem}[section]
\newtheorem{question}{Question}[section]
\newtheorem{corollary}{Corollary}[section]

\theoremstyle{definition}
\newtheorem{result}{Result}
\newtheorem{example}{Example}[section]

\usepackage[utf8]{inputenc}
\usepackage{todonotes}

\usepackage{authblk}

\begin{document}

\title{Randomized simulation of quantum channels using small ancilla} 

\author[1]{Marcin Kotowski \thanks{\!mkotowski@cft.edu.pl}}
\author[2]{Michał Kotowski \thanks{\!michal.kotowski@mimuw.edu.pl}}
\affil[1]{Center for Quantum Enabled-Computing, Center for Theoretical Physics of the
Polish Academy of Sciences, Al. Lotników 32/46, 02-668 Warsaw, Poland}
\affil[2]{University of Warsaw, Faculty of Mathematics, Informatics, and Mechanics, Institute of Mathematics, Banacha 2, 02-097 Warsaw, Poland}
\date{}                     
\setcounter{Maxaffil}{0}
\renewcommand\Affilfont{\itshape\small}
\date{}            
\maketitle

\begin{abstract}
We study the problem of implementing a quantum channel using a small ancilla with classical randomization and postselection on an output failure flag. The simulation is probabilistic, but exact conditioned on success. We prove that any unital channel on a $d$-dimensional system can be simulated with ancilla dimension $k$ and success probability $\Omega(\frac{k}{\log d})$. Equivalently, every unital channel on $n$ qubits can be simulated with constant success probability using only $O(\log n)$ ancillary qubits. We show that this tradeoff is best possible by constructing a family of channels which cannot be simulated with success probability better than $O(\frac{k}{\log d})$. We also show that the class of \textit{highly noncommutative} channels, which includes random channels, admits constant-success simulation with just a single ancillary qubit. We further show that this model of simulation necessarily fails for strongly non-unital channels and discuss possible extensions involving adaptivity. On the technical side we rely on a partition-based protocol and matrix concentration inequalities, including the recent refinement of noncommutative Khintchine inequalities due to  Bandeira, Boedihardjo and van Handel. 
\end{abstract}

\tableofcontents

\section{Introduction}

Quantum systems fundamentally evolve under unitary operators and such operations also form basic computational blocks of circuit-based quantum computing. However, in open systems the most general form of quantum maps is given by quantum channels (CPTP maps) and in fact many natural quantum information processing tasks can be cast as implementing a specific non-unitary quantum channel. Examples include measurement with discarded outcomes (the L\"uders channel), recovery map in quantum error correction \cite{beny}, simulating open system evolution by implementing Lindbladian evolutions $\Phi=e^{t\mathcal{L}}$ \cite{cleve} and dissipative state preparation that uses channels with prescribed fixed points \cite{verstraete, kraus}. The link between quantum channels and unitary operators is given by the classic Stinespring theorem \cite{watrous}, which states that any quantum channel can be implemented by performing a unitary on a larger system and then tracing out the environment.

A general quantum channel on a $d$-dimensional system may require as much as $d^2$-dimensional ancilla to implement using Stinespring theorem. For an $n$ qubit system the ancilla required is thus $2n$ additional qubits. Since adding ancillary qubits to a system is costly, it is natural to try to reduce the ancilla dimension as much as possible. One way to do this is by classical randomization, i.e. by implementing a random unitary operation. Such schemes are ubiquitous in quantum computing, e.g. in the form of picking random Pauli operators or Haar random unitaries. In this way one can implement the class of \textit{mixed unitary channels} with no ancilla. However, even restricting to unital channels, it is known that unital channels are in general not always mixed unitary. Moreover, even the problem of deciding if a given channel is mixed unitary is NP-hard \cite{nphard}, which further limits the usefulness of this approach if we want to implement general channels. Other approaches to reducing the cost of channel implementation include finding a low Kraus rank approximation of a channel \cite{lancien-winter} and performing a sequence of measurements with feedback and adaptive operations \cite{lloyd, shen, cnots}.

In this work, we consider an approach which uses classical randomization together with \textit{postselection}, i.e. measuring an output qubit that heralds whether the simulation was successful or not. The simulation succeeds with some probability possibly less than one and upon that event the channel is simulated on a given input exactly, without approximation. This paradigm of simulation has been effectively used for the problem of simulating arbitrary POVMs using projective measurements and classical resources \cite{acin, maciejewski, singal}, culminating in \cite{kotowski-oszmaniec} where the authors prove that every POVM can be simulated with constant (dimension-independent) probability with only one ancillary qubit. Similar notions have recently been used to study implementations of quantum instruments \cite{armin}. Our paper is therefore situated in the line of work that explores fundamental possibilities and limitations of simulating various quantum objects (POVMs, channels, instruments) with constrained quantum resources, but allowing classical randomness and postprocessing. We do not consider specific practical hardware implementations, but rather study the problem through an information- and resource-theoretic lens.

Our main result is an asymptotically tight characterization of the possible tradeoff between success probability and amount of ancilla needed for the class of unital channels. We give a simulation protocol that implements a given unital channel on $n$ qubits with constant (dimension-independent) success probability using a very small (only $\log n$ qubits) ancilla. Our protocol in general provides success probability $\sim \frac{2^l}{n}$ when using $l$ ancilla qubits. We then show that this is best possible by constructing a family of channels, which we call the epsilon-net channels, that cannot be simulated with probability better than $\sim \frac{2^l}{n}$ when using only $l$ ancilla qubits. For random channels and a more general class of \textit{highly noncommutative} channels in fact one additional qubit suffices to give constant success probability. The methods apply also to channels which are only approximately unital. Finally, we bound the limits of this model of simulation by showing that it necessarily fails for strongly non-unital channels and then propose and study extensions of the model using adaptive operations.

\subsection*{Overview of results}

Any quantum channel $\Phi$ can be implemented by augmenting the system with an additional subsystem (called the ancilla or environment), implementing a unitary on the larger system and then tracing out the environment. This is called the Stinespring dilation of the channel and requires ancilla dimension equal to the Kraus rank of $\Phi$, which for a $d$-dimensional system can be as large as $d^2$ \cite{watrous}. For a system of $n$ qubits this may require up to additional $2n$ ancilla qubits. We are interested in vastly reducing the amount of ancilla necessary by allowing classical randomization, i.e. allowing the unitary to be sampled randomly, and allowing some probability of failure.

We consider the following notion of probabilistic simulation of a quantum channel $\Phi$ using classical randomization and postselection. We equip the system with an additional output flag qubit which, when measured, will herald whether the simulation was successful or not. The allowed simulation procedure (we refer to \cref{def:postselection} for a formal definition) consists of the following steps: 
\begin{enumerate}
    \item Sample outcome $\alpha$ from a classical probability distribution $p_{\alpha}$.
    \item Implement a unitary $U_{\alpha}$ on a larger system and then trace out the ancilla, thus implementing a channel $\Phi_{\alpha}$.
    \item Measure the output flag qubit in the computational basis and decide that simulation was successful upon measuring $0$ and failed otherwise.
\end{enumerate}

 The channels $\Phi_\alpha$ are chosen so that, given an input state $\rho$, if measuring the flag qubit gives $0$, the resulting state on the remaining (non-ancilla) qubits is $\Phi(\rho)$. In other words, conditionally on simulation being successful we have implemented the channel $\Phi$ exactly (with no error). 
 
Our main results give a complete characterization of the asymptotic tradeoff between ancilla dimension and success probability for unital channels. We first show a lower bound on success probability, formally stated as \cref{th:prob-succ-groups-k}.

\begin{result}\label{res:1}
    Every unital channel $\Phi$ in dimension $d$ can be simulated with ancilla dimension $k$ with success probability at least
    \begin{equation}
        q_{\mathrm{succ}} = \Omega\left(\min\left\{1, \frac{k}{\log d}\right\}\right).
    \end{equation}
    Equivalently, any unital channel $\Phi$ on $n$ qubits can be simulated with $l$ ancilla qubits with success probability at least $q=\Omega(\frac{2^l}{n})$. In particular, every unital channel on $n$ qubits can be simulated with probability $q=\Omega(1)$ (dimension-independent) with only $O(\log n)$ ancilla qubits.
\end{result}
Postselection allows us to repeat simulation rounds until the simulation succeeds, assuming we can obtain a new copy of the input state in each round. In this way, using a very small ancilla (only $\log n$ additional qubits, compared to $2n$ qubits necessary for Stinespring dilation) we can simulate any unital channel with success probability arbitrarily close to one by repetition.

We next show that the above result is in general best possible, as there exists a family of unital channels whose simulation success probability saturates the stated lower bound. We have thus completely solved the problem of asymptotically characterizing the maximal possible success probability when constrained by given ancilla dimension. This is formally proved in \cref{th:epsilon-net-bound}.

\begin{result}\label{res:2}
    For each $d$ there exists a unital channel $\phinet$ in dimension $d$, which we call the \emph{epsilon-net channel}, that cannot be simulated with ancilla dimension $k$ with success probability higher than $q=O(\frac{k}{\log d})$.
\end{result}

Finally, we show a broad class of channels for which the bound from \cref{res:1} can be improved to $q=\Omega(1)$, i.e. constant, dimension-independent success probability, with just one additional qubit. This is possible if the Choi matrix $J(\Phi)$ has small operator norm. We call such channels \textit{highly noncommutative} for reasons explained in \cref{sec:highly-noncommutative}. In particular, a random channel with high probability will belong to this class. A precise statement of this result is given in \cref{th:unital-improved}.

\begin{result}\label{res:3}
Any unital channel whose Choi matrix $J(\Phi)$ has sufficiently small operator norm can be simulated with success probability $q=\Omega(1)$ with just one ancillary qubit.
\end{result}

Thus, highly noncommutative channels can be simulated cheaply. In contrast, the epsilon-net channel from \cref{res:2}, which is the hardest to simulate, is a commutative (Schur) channel, i.e. its Kraus operators commute. This is in agreement with the intuition that concentration inequalities from random matrix theory, upon which our protocol is based, work best when applied to highly noncommutative matrices and worst for commutative ones.

The results above are limited to channels which are unital or close to unital. We formally prove in \cref{sec:non-unital} that strongly non-unital channels cannot be simulated in the above sense with small ancilla and high success probability. Thus, we delineate the power of this simulation model. We then propose extending the model with adaptive operations and provide a simulation protocol with small ancilla for measure-and-prepare channels.

\subsection*{Outline of proof}

Let us now briefly sketch how the results are proved. The simulation protocol from \cref{res:1} requires partitioning the set of Kraus operators of $\Phi$ into small subsets, whose size depends on the allowed ancilla dimension, and then randomly choosing one of the subsets and implementing a corresponding channel. This partitioning scheme is inspired by a similar scheme used in \cite{kotowski-oszmaniec} to simulate POVMs. We describe it in detail in \cref{sec:protocol}. 

To ensure high success probability, the protocol requires a partition of Kraus operators which is ``flat'', i.e. all components have small operator norm. In the case of one ancilla qubit this simply means that each Kraus operator $K_i$ should have small norm. To obtain such a partition we exploit non-uniqueness of Kraus representation -- for a given Kraus representation, we use a Haar random isometry to produce another Kraus representation of the same channel which with high probability satisfies the ``flatness'' condition. To analyze the random isometry mathematically we use matrix concentration inequalities of Tropp \cite{tropp-book}, in particular matrix martingale inequalities \cite{tropp-tail}. This is done in detail in \cref{sec:logd-simulation}.

To obtain \cref{res:3} we need stronger matrix inequalities, namely a refined version of noncommutative Khintchine inequalities recently established by Bandeira, Boedihardjo and van Handel \cite{vanHandel}. We note that this variant of noncommutative Khintchine inequalities has recently been successfully used in quantum information to analyze quantum expanders \cite{lancien, lancien2}, Haar random codes \cite{wright}, single-qubit tomography \cite{pauli} and random fermionic states \cite{fermionic} -- we hope that our paper will further popularize their use in the quantum information context. We describe this approach in \cref{sec:noncomm}. Finally, the epsilon-net channel from \cref{res:2} is introduced and analyzed in \cref{sec:epsilon-net}.

\subsection*{Open problems}


We propose several open problems and possible extensions of our work. The first is to extend randomized simulation methods beyond unital channels. Of particular interest is to study the additional power given by measurements and adaptive operations. Specifically we formulate \cref{q:adaptive} -- can \emph{any} quantum channel $\Phi$ be simulated with constant success probability and one ancilla qubit by performing a constant number of adaptive rounds? It is also interesting to study simulations of specific classes of channels arising in practice, e.g. Lindbladian evolutions $\Phi_t = e^{t \mathcal{L}}$ used in dissipative computations.

The second problem is to simulate further quantum operations, such as instruments or superchannels. It would also be highly relevant to consider the notion of approximate simulation, where we only require the resulting channel to be $\varepsilon$-close in the diamond distance to the target channel. For small $\varepsilon$ this notion is more operationally relevant than exact simulation. Lastly, our construction depends on a randomized choice of good Kraus representations and it would be desirable to make this step deterministic and explicit, e.g. by finding a good Kraus representation algorithmically.


\subsection*{Acknowledgments}
We thank Michał Oszmaniec for extensive discussions about the project and also Chris Jones, Grzegorz Rajchel-Mieldzioć and Zbigniew Puchała for useful discussions. We thank Radosław Adamczak for bringing to our attention the noncommutative Khintchine inequalities. The authors acknowledge support from National Science Center, Poland within the QuantERA III Programme (No 2023/05/Y/ST2/00140 acronym Tuquan). The C4QEC project is carried out within the IRAP of the Foundation for Polish Science co-financed by the European Union.

\subsection*{AI use disclosure}
ChatGPT Pro 5.4 and an internal OpenAI model were used in discovering and analyzing the epsilon-net channel -- their output was checked and validated by the authors and all the writing is fully human.

\section{Setup: quantum channels and randomized simulation}


\subsection{Background on quantum channels}

For a finite-dimensional complex Hilbert space $\mathcal{H}$ let $B(\mathcal{H})$ denote the space of all linear operators on $\mathcal{H}$. By $\id_m$ we will denote the identity map on $\C^m$. For an operator $A \in B(\mathcal{H})$ we will denote its operator norm (equal to the largest singular value) by $\norm{A}_{\infty}$. Note that $\norm{AA^{\dagger}}_{\infty} = \norm{A^{\dagger}A}_{\infty} = \norm{A}^2_{\infty}$. We shall denote the positive semidefinite order on matrices by $\preccurlyeq$.

We first recall some basic notions about quantum channels -- for a comprehensive treatment, including proofs of the propositions mentioned, see e.g. the classic textbook \cite{watrous}.  A linear map $\Phi: B(\mathcal{H}_1) \to B(\mathcal{H}_2)$ is positive if $\Phi(X) \succcurlyeq 0$ whenever $X  \succcurlyeq 0 $. A map $\Phi$ is completely positive if $\Phi \otimes \id_m$ is positive for every integer $m\geq 1$.

\begin{definition}\label{def:channel-watrous}
    A quantum channel $\Phi: B(\C^{d}) \to B(\C^{d'})$ is a linear map satisfying any of the equivalent conditions:
    \begin{enumerate}
        \item $\Phi$ is trace-preserving and completely positive.
        \item There exist operators $K_i: \C^{d}\to\C^{d'}, i=1,\dots,n$, with $n \leq d d'$, called the {\normalfont Kraus representation} of $\Phi$, satisfying
        \begin{equation}\label{eq:channel-condition}
            \sum_{i=1}^{n}K^{\dagger}_i K_i = \id_d
        \end{equation}
        such that
        \begin{equation}\label{eq:kraus-representation}
        \Phi(X) = \sum_{i=1}^{n} K_i X K^{\dagger}_i.
        \end{equation}
        \item There exists an isometry $U: \C^{d} \to \C^{d'} \otimes \C^n$, with $n \leq d d'$, such that
        \begin{equation}
            \Phi(X) = \tr_E\left( U X U^{\dagger} \right),
        \end{equation}
        where $\tr_E$ denotes the partial trace over the environment $\C^n$. Any such $U$ is called a {\normalfont Stinespring dilation} of $\Phi$.
        \item The (unnormalized) {\normalfont Choi matrix} $J(\Phi)$, which is an operator on $\C^{d}\otimes\C^{d'}$ defined as
        \begin{equation}\label{eq:choi}
            J(\Phi) := (\id_{d} \otimes \Phi)(\ketbra{\Psi}{\Psi}),
        \end{equation}
        where $\ket{\Psi} = \sum_{i=1}^{d}\ket{i}\otimes\ket{i}$ is the (unnormalized) maximally entangled state, is positive semidefinite and satisfies $\tr_{2}J(\Phi) = \id_d$.
        \end{enumerate}
        The minimal dimension $n$ of the environment of a Stinespring dilation is equal to the minimal number of Kraus operators in the Kraus representation \eqref{eq:kraus-representation}, called the {\normalfont Kraus rank} of $\Phi$. Both numbers are equal to the rank of the Choi matrix $J(\Phi)$.
\end{definition}

We will denote the set of all quantum channels from $B(\C^{d})$ to $B(\C^{d'})$ by $\channels(\C^{d},\C^{d'})$. Unless stated otherwise we assume the input and output dimensions are the same, $d=d'$, and we denote the corresponding set of channels simply by $\channels(\C^d)$.

An important class of channels are unital channels, which are the quantum analogue of doubly stochastic matrices.

\begin{definition}\label{def:unital}
    A channel $\Phi \in \channels(\C^d)$ is called {\normalfont unital} if it preserves the identity matrix, i.e. $\Phi(\id_d)=\id_d$. Equivalently, its Kraus representation satisfies
    \begin{equation}\label{eq:channel-unital}
        \sum_{i=1}^{n}K_i K^{\dagger}_i = \id_d.
    \end{equation}
\end{definition}

We now recall some basic properties of channels. The first one, important for our later applications, is that the Kraus representation of a channel is not unique -- any two Kraus representations of the same channel $\Phi$ are related by an isometry (\cite[Corollary 2.23]{watrous}).

\begin{proposition}\label{prop:kraus-unitary}
    Let $\{K_i\}_{i=1}^{n}, \{L_i\}_{i=1}^{n}$ be two Kraus representations of the same channel $\Phi$. Then there exists a unitary $U: \C^n \to \C^n$ such that for each $i=1,\ldots,n$
    \begin{equation}
        L_i = \sum_{j=1}^{n}U_{ij}K_j.
    \end{equation}
    The case of two Kraus representations of unequal sizes can be handled by padding one of them with zero operators.
\end{proposition}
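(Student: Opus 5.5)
The plan is to go through the Choi matrix (condition 4 of \cref{def:channel-watrous}). Write each Kraus operator in terms of its vectorization $\ket{K_i} := (\id_d \otimes K_i)\ket{\Psi}$. A direct expansion gives
\begin{equation*}
J(\Phi) = \sum_{i=1}^{n} \ketbra{K_i}{K_i} = \sum_{i=1}^{n} \ketbra{L_i}{L_i},
\end{equation*}
since both families represent the same map $\Phi$. Form the $d^2 \times n$ matrices $A$ and $B$ whose columns are the vectors $\ket{K_i}$ and $\ket{L_i}$. The identity above then reads $AA^{\dagger} = BB^{\dagger} = J(\Phi)$.

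The core step is the standard fact that two factorizations of the same positive operator differ by a unitary on the right. Concretely, if $AA^{\dagger} = BB^{\dagger}$ with $A, B$ both $d^2 \times n$, then $B = AV$ for some unitary $V$ on $\C^n$. To prove it, take polar decompositions $A = J^{1/2} W_A$ and $B = J^{1/2} W_B$, where $J := J(\Phi)$. Here $W_A$ and $W_B$ are partial isometries whose initial spaces have dimension $\mathrm{rank}\, J \le n$. On $\mathrm{supp}\,J$, the maps $W_A^{\dagger}$ and $W_B^{\dagger}$ are isometries into $\C^n$ with images of equal dimension. Therefore there is a unitary $V$ on $\C^n$ with $V^{\dagger} W_A^{\dagger} = W_B^{\dagger}$ on $\mathrm{supp}\,J$. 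To obtain it, extend the isometry $W_A^{\dagger}(\cdot) \mapsto W_B^{\dagger}(\cdot)$ between the two images to the whole of $\C^n$ by mapping orthogonal complements onto each other; both complements have dimension $n - \mathrm{rank}\,J$. This yields $B = AV$.

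Equivalently, and more elementarily, one can check that $\ket{x} := A^{\dagger}\ket{\phi} \mapsto B^{\dagger}\ket{\phi}$ is well defined and isometric on $\mathrm{range}(A^{\dagger})$, because $\|A^{\dagger}\phi\| = \|B^{\dagger}\phi\|$ for every $\phi$. Extending it to a unitary gives $V^{\dagger}$ with $B^{\dagger} = V^{\dagger} A^{\dagger}$, and hence $B = AV$.

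Reading $B = AV$ column by column gives $\ket{L_i} = \sum_j V_{ji}\ket{K_j}$. Setting $U := V^{T}$ (which is unitary), this is $\ket{L_i} = \sum_j U_{ij}\ket{K_j}$. Since vectorization $K \mapsto (\id_d \otimes K)\ket{\Psi}$ is a linear bijection from operators to $\C^d\otimes\C^d$, this gives $L_i = \sum_j U_{ij}K_j$.

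For families of unequal size, pad the shorter one with zero operators. This changes neither $J(\Phi)$ nor the channel, so the same argument applies.

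The only delicate point is the extension of the partial isometry to a full unitary, which needs the dimension count of the orthogonal complements. This is where the equal length $n$ of the two families (after padding) is used.
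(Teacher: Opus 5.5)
Your proof is correct. The paper gives no argument of its own and simply cites \cite[Corollary 2.23]{watrous}, whose proof follows essentially the same route as yours: it writes $J(\Phi)=\sum_i \ketbra{K_i}{K_i}=\sum_i \ketbra{L_i}{L_i}$, which is equivalently $AA^{\dagger}=BB^{\dagger}$, and then uses unitary freedom in such factorizations (unitary equivalence of purifications), including the dimension count needed to extend the partial isometry to a unitary.
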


A common way of realizing the Stinespring dilation is by choosing an isometry $U: \C^{d}\otimes \C^{n} \to \C^{d'} \otimes \C^n$ and writing
\begin{equation}
    \Phi(\rho) := \tr_{E}\left( U \rho \otimes \ketbra{0}{0} U^{\dagger} \right).
\end{equation}
In such setting the Kraus operators $K_i$ can be read off from $U$, in a fixed basis, as the subsequent $d' \times d$ blocks corresponding to the first $d$ columns of the matrix $U$ of size $nd' \times nd$.

A channel $\Phi$ is called \emph{extremal} if it cannot be written as a nontrivial convex combination of two channels, i.e. if $\Phi = p \Phi_1 + (1-p)\Phi_2$ for $p \in (0,1)$ implies $\Phi_1=\Phi_2$. Since the set $\channels(\C^{d},\C^{d'})$ is a compact convex subset of the space of all linear maps from $B(\C^{d_1})$ to $B(\C^{d_2})$, the Krein-Milman theorem states any channel $\Phi$ can be decomposed as a finite convex combination of extremal channels. 




We are interested in implementing a given channel $\Phi$ using unitary operations, allowing the use of an ancillary space of some fixed dimension $k$. The equivalence from \cref{def:channel-watrous} gives one obvious way of implementing a given channel -- simply use its Stinespring dilation. In general this method may require ancilla of dimension $k=d^2$, as the maximal Kraus rank of a channel is $d^2$ and there exist channels which achieve this bound. We will be interested in reducing this ancilla dimension to a much smaller size.

To this end, consider the use of classical randomization -- for a given set of channels $\{\Phi_i\}_{i=1}^{m}$ and a probability distribution $\{p_i\}_{i=1}^{m}$ we implement the channel $\Phi_i$ with probability $p_i$. The channel implemented by such a procedure is equal to the convex combination of the channels $\Phi_i$:
\begin{equation}
    \Phi = \sum_{i=1}^{m}p_i \Phi_i.
\end{equation}

An important and natural class of channels obtained in this way are mixed unitary channels. A channel $\Phi$ is called \emph{mixed unitary} if $\Phi$ is a convex combination of unitary channels, i.e. for some probability distribution $\{p_i\}_{i=1}^{m}$ and unitary operators $\{U_i\}_{i=1}^{m}$ we have
    \begin{equation}
        \Phi(X) = \sum_{i=1}^{m} p_i U_iXU^{\dagger}_i.
    \end{equation}

A mixed unitary channel can be implemented with unitary operations and classical randomization with no ancilla. In general any channel can be implemented with unitary operations and classical randomization using ancilla of dimension $d$ -- first decompose the channel into a convex combination of extremal channels and then simulate the extremal channels using Stinespring dilation. Any extremal channel has Kraus rank at most $d$, which follows from a linear algebraic criterion for extremality, see \cite[Theorem 2.31]{watrous}. Thus the Stinespring dilation of each extremal channel in the decomposition can be implemented using ancilla of dimension at most $d$.

    

    In general, it would be desirable to be able to decompose a given channel into a convex combination of channels with low Kraus rank, as this implies simulability using randomization with small ancilla. Note, however, that already deciding if a channel is a mixed unitary channel or not is NP-hard \cite{nphard}. We conjecture that the same is true for the more general problem of deciding if a channel is in the convex hull of channels of low Kraus rank.

\subsection{The simulation protocol}\label{sec:protocol}

To further reduce the ancilla dimension, let us relax the requirement that the channel be implemented perfectly and allow simulation protocols which only succeed with some probability possibly smaller than one. In this setting it is natural to allow postselection using an additional output qubit, which can be measured to determine whether the simulation was successful. The class of allowed simulation protocols is captured by the following definition:

\begin{definition}\label{def:postselection}
    We say that a channel $\Phi \in \channels(\C^d)$ can be {\normalfont simulated with classical randomization and postselection} with ancilla dimension $k$ and success probability $q \in [0,1]$ if there exists a channel $\Phi' \in \channels(\C^d, \C^d\otimes \C^2)$ such that for any input state $\rho$
\begin{equation}\label{eq:postselection}
        \Phi'(\rho) = q \cdot \Phi(\rho) \otimes \ketbra{0}{0} + (1-q)\Phi^{\junk}(\rho)\otimes \ketbra{1}  {1},
    \end{equation}
    where $\Phi' = \sum_{\alpha} p_{\alpha} \Phi_{\alpha}$, each $\Phi_{\alpha}$ can be implemented using a unitary with ancilla of dimension at most $k$ and the $\Phi^{\junk}$ channel is arbitrary.
\end{definition}


The definition corresponds to a simulation protocol which first samples $\alpha$ from the classical distribution $p_\alpha$ and then implements $\Phi_\alpha$ using its Stinespring dilation -- by our assumption this can be done using ancilla of dimension $k$. Then a measurement of the output flag qubit in the computational basis $\{\ket{0},\ket{1}\}$ is made. We obtain $0$ with probability $q$ and in this case the post-measurement state is $ \Phi(\rho) \otimes \ketbra{0}{0}$, so upon discarding the flag qubit we have simulated the action of $\Phi$ on $\rho$ exactly. On the other hand, if we obtain $1$, we decide that the simulation has failed. In the sequel, whenever we speak of simulating a channel we will mean simulating with classical randomization and postselection in the sense of \cref{def:postselection}.




We now describe the main simulation protocol used in our paper, which is based on the Kraus representation of a channel. The protocol will simulate a given channel using ancilla dimension $k+1$ for some fixed $k$ and will employ a suitable partition of Kraus operators into groups of size at most $k$.

\begin{proposition}\label{prop:simulation-protocol}

Let $\Phi \in \channels(\C^d)$ be a channel with Kraus representation $\{K_i\}_{i=1}^{n}$. Fix a partition of $\{1,\dots,n\}$ into $r$ disjoint subsets $\{X_{\alpha}\}_{\alpha=1}^{r}$, where each subset is of size at most $k$. For each $\alpha=1,\dots, r$ define the channel $\Phi^{(\alpha)}$ with one additional output qubit
\begin{equation}\label{eq:phi_alpha}
    \Phi^{(\alpha)}(\rho) := \lambda_\alpha \sum_{i\in X_\alpha} K_i \rho K_i^\dagger \otimes \ketbra{0}{0} + J^{(\alpha)}\rho (J^{(\alpha)})^{\dagger} \otimes \ketbra{1}{1},
\end{equation}
where
\begin{equation}
J^{(\alpha)} := \sqrt{\id_d - \lambda_{\alpha}\sum_{i\in X_{\alpha}}K^{\dagger}_iK_i}
\end{equation}
and $\lambda_{\alpha} := \norm{\sum_{i\in X_{\alpha}}K^{\dagger}_iK_i}^{-1}_{\infty}$. Define the probability distribution
\begin{equation}
    p_{\alpha} := \frac{\lambda^{-1}_{\alpha}}{\sum_{\beta=1}^{r} \lambda^{-1}_{\beta}}, \quad \alpha = 1, \ldots, r
\end{equation}
and consider the simulation protocol with the following steps:
\begin{enumerate}
    \item Sample $\alpha$ according to the distribution $p_{\alpha}$.
    \item Implement the channel $\Phi^{(\alpha)}$ using its Stinespring dilation.
    \item Measure the flag qubit in the computational basis and then postselect on the measurement result being $0$.
\end{enumerate} 
The protocol simulates the channel $\Phi$ using ancilla dimension at most $k+1$ with success probability
\begin{equation}\label{eq:qsucc}
    q_{\mathrm{succ}} := \frac{1}{\sum_{\alpha=1}^{r} \norm{\sum_{i\in X_{\alpha}}K^{\dagger}_iK_i}_{\infty}}.
\end{equation}
\end{proposition}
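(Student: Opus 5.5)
The plan is to verify three things in order: each $\Phi^{(\alpha)}$ is a legitimate channel implementable with ancilla dimension at most $k+1$; the averaged channel $\Phi' = \sum_\alpha p_\alpha \Phi^{(\alpha)}$ has the form \eqref{eq:postselection}; and the weight in front of the $\ketbra{0}{0}$ branch equals \eqref{eq:qsucc}.

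\textbf{Step 1: each $\Phi^{(\alpha)}$ is a channel.} Set $A_\alpha := \sum_{i\in X_\alpha} K_i^\dagger K_i \succcurlyeq 0$. By the choice $\lambda_\alpha = \norm{A_\alpha}_\infty^{-1}$ we have $\lambda_\alpha A_\alpha \preccurlyeq \id_d$, so $J^{(\alpha)}$ is a well-defined positive square root. The family of operators
\begin{equation}
\left\{\sqrt{\lambda_\alpha}\, K_i \otimes \ket{0}\right\}_{i\in X_\alpha} \cup \left\{J^{(\alpha)}\otimes\ket{1}\right\}, \qquad \text{each mapping } \C^d \to \C^d\otimes\C^2,
\end{equation}
satisfies the completeness relation, since
\begin{equation}
\sum_{i\in X_\alpha} \lambda_\alpha K_i^\dagger K_i + (J^{(\alpha)})^\dagger J^{(\alpha)} = \lambda_\alpha A_\alpha + \id_d - \lambda_\alpha A_\alpha = \id_d .
\end{equation}
Hence $\Phi^{(\alpha)}$ is CPTP in the sense of \cref{def:channel-watrous}. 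Its Kraus rank is at most $|X_\alpha| + 1 \leq k+1$, so its Stinespring dilation uses an environment of dimension at most $k+1$. Here I read ``ancilla'' as the environment; the flag qubit is part of the output.

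\textbf{Step 2: averaging over $\alpha$.} Let $S := \sum_\beta \lambda_\beta^{-1}$, so that $p_\alpha \lambda_\alpha = 1/S$ for every $\alpha$. Then
\begin{equation}
\sum_\alpha p_\alpha \Phi^{(\alpha)}(\rho) = \frac{1}{S}\sum_\alpha \sum_{i\in X_\alpha} K_i\rho K_i^\dagger \otimes \ketbra{0}{0} + \Big(\sum_\alpha p_\alpha J^{(\alpha)}\rho (J^{(\alpha)})^\dagger\Big) \otimes \ketbra{1}{1}.
\end{equation}
Because the $X_\alpha$ partition $\{1,\dots,n\}$, the first term equals $S^{-1}\Phi(\rho)\otimes\ketbra{0}{0}$. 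So $q = 1/S$, which is exactly \eqref{eq:qsucc}, using $\lambda_\beta^{-1} = \norm{A_\beta}_\infty$.

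\textbf{Step 3: the junk branch.} Define $\Phi^{\junk}(\rho) := (1-q)^{-1}\sum_\alpha p_\alpha J^{(\alpha)}\rho(J^{(\alpha)})^\dagger$. To see that this is a channel (when $q<1$), note it is completely positive, and its trace is
\begin{equation}
\sum_\alpha p_\alpha \tr\big[(\id_d - \lambda_\alpha A_\alpha)\rho\big] = 1 - \frac{1}{S}\sum_\alpha \tr(A_\alpha\rho) = 1 - q ,
\end{equation}
using $\sum_\alpha A_\alpha = \id_d$. So the normalization by $1-q$ makes it trace-preserving. If $q=1$, the junk term vanishes and $\Phi^{\junk}$ may be any channel. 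Together these give \eqref{eq:postselection}.

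\textbf{Main obstacle.} There is no real difficulty here. The only points needing care are the positivity of $\id_d - \lambda_\alpha A_\alpha$, which is needed for the square root to exist, and the trace computation for the junk channel. Both follow directly from the choice of $\lambda_\alpha$ and the completeness relation \eqref{eq:channel-condition}.
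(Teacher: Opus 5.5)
Your proof is correct and follows the paper's argument: you use the same Kraus operators $\sqrt{\lambda_\alpha}K_i\otimes\ket{0}$ and $J^{(\alpha)}\otimes\ket{1}$, the same observation that $p_\alpha\lambda_\alpha = 1/\sum_\beta\lambda_\beta^{-1}$ is independent of $\alpha$, and the same Kraus-rank count giving ancilla dimension at most $k+1$. Your explicit trace computation, showing that the $\ketbra{1}{1}$ branch normalizes to a genuine channel $\Phi^{\junk}$, fills in a detail the paper leaves implicit.
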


\begin{proof}
First, note that $\lambda_{\alpha}$ is the maximal $\lambda$ such that $\id_d \succcurlyeq \lambda\sum_{i\in X_{\alpha}}K^{\dagger}_iK_i$, so $J^{(\alpha)}$ is well-defined. It is also easy to see that $\Phi^{(\alpha)}$ is indeed a valid channel with Kraus rank at most $k+1$ -- the Kraus operators of $\Phi^{(\alpha)}$ are given by $\widetilde{K_i}\ket{\psi} = \sqrt{\lambda_{\alpha}}(K_i\ket{\psi})\otimes\ket{0}$ and $\widetilde{J^{(\alpha)}}\ket{\psi} = (J^{(\alpha)}\ket{\psi})\otimes\ket{1}$.

We readily verify that
\begin{align}
    &\sum_{\alpha=1}^{r}p_{\alpha}\Phi^{(\alpha)} = \sum_{\alpha=1}^{r}\frac{\lambda^{-1}_{\alpha}}{\sum_{\beta=1}^{r} \lambda^{-1}_{\beta}} \cdot \left( \lambda_\alpha \sum_{i\in X_\alpha}K_i \rho K_i^\dagger \otimes \ketbra{0}{0} + J^{(\alpha)}\rho (J^{(\alpha)})^{\dagger} \otimes \ketbra{1}{1} \right) = \\
   &  \left( \sum_{\alpha=1}^{r} \norm{\sum_{i\in X_{\alpha}}K^{\dagger}_iK_i}_{\infty}\right)^{-1} \left( \sum_{\alpha=1}^{r}\sum_{i\in X_\alpha} K_i \rho K_i^\dagger\right) \otimes \ketbra{0}{0} +(1-q_{\mathrm{succ}})\Phi^{\junk} \otimes\ketbra{1}{1} =\\
   &q_{\mathrm{succ}} \Phi\otimes\ketbra{0}{0} + (1-q_{\mathrm{succ}})\Phi^{\junk} \otimes\ketbra{1}{1},
\end{align}
where the exact form of $\Phi^{\junk}$ is unimportant. Thus, the above procedure simulates the channel $\Phi$ with postselection achieving success probability $q_{\mathrm{succ}}$. Since we assumed that the size of each subset $X_{\alpha}$ is at most $k$, by \eqref{eq:phi_alpha} each of the channels $\Phi^{(\alpha)}$ has Kraus rank at most $k+1$. As such, each of them can be implemented using Stinespring dilation with ancilla dimension at most $k+1$. Therefore, the above protocol simulates $\Phi$ with postselection using ancilla dimension at most $k+1$ and with success probability $q_{\mathrm{succ}}$ as defined in \eqref{eq:qsucc}.
\end{proof} 


The protocol above has an inherent tradeoff between the size $k$ of the subsets in the partition, which determines the necessary ancilla dimension, and the success probability $q_{\mathrm{succ}}$. One particular case which will be of interest in the sequel is simulation with partitions consisting of one element subsets, in which case each $\Phi^{(\alpha)}$ has Kraus rank two and thus requires only a single ancilla qubit to implement. \cref{prop:simulation-protocol} thus implies the following

\begin{corollary}\label{cor:main-corollary}
    Any channel $\Phi$ with Kraus representation $\{ K_i  \}_{i=1}^{n}$ can be simulated with a single additional qubit with success probability
    \begin{equation}
        q_{\mathrm{succ}} =  \frac{1}{\sum_{i=1}^{n}\norm{K^{\dagger}_iK_i}_{\infty}}.
    \end{equation}
    In particular, since for each $i$ we have $\norm{K^{\dagger}_iK_i}_{\infty} \leq 1$, the success probability is always at least $\frac{1}{n}$.
\end{corollary}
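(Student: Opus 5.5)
The plan is to specialize \cref{prop:simulation-protocol} to the finest possible partition and then bound each term of the resulting denominator. I would take $r=n$ and $X_\alpha=\{\alpha\}$ for $\alpha=1,\dots,n$, so that every subset has size $k=1$. \cref{prop:simulation-protocol} then gives a simulation with ancilla dimension at most $k+1=2$, which is a single additional qubit. The success probability is
\begin{equation*}
q_{\mathrm{succ}}=\Big(\sum_{\alpha=1}^{n}\norm{K_\alpha^\dagger K_\alpha}_\infty\Big)^{-1},
\end{equation*}
which is exactly the claimed formula. This step is pure substitution into \eqref{eq:qsucc}.

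For the lower bound $1/n$, I would use the channel condition \eqref{eq:channel-condition}, namely $\sum_{j}K_j^\dagger K_j=\id_d$. Every summand $K_j^\dagger K_j$ is positive semidefinite, so for each fixed $i$ we get $0\preccurlyeq K_i^\dagger K_i\preccurlyeq \sum_j K_j^\dagger K_j=\id_d$. Hence $\norm{K_i^\dagger K_i}_\infty\le 1$, the denominator is at most $n$, and so $q_{\mathrm{succ}}\ge 1/n$.

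There is no real obstacle here. The only point to check is that each $\lambda_\alpha$ in \cref{prop:simulation-protocol} is finite, which requires $K_\alpha\neq 0$. I would handle this by discarding zero Kraus operators at the start, which leaves $\Phi$ unchanged and does not increase $n$.
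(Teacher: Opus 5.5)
Your proposal is correct and follows the paper's own route: specialize \cref{prop:simulation-protocol} to the singleton partition $X_\alpha=\{\alpha\}$, which gives ancilla dimension $2$ and exactly the stated $q_{\mathrm{succ}}$, then use $0\preccurlyeq K_i^\dagger K_i\preccurlyeq\sum_j K_j^\dagger K_j=\id_d$ to get the $1/n$ bound. Discarding zero Kraus operators so that each $\lambda_\alpha$ is finite is a sensible detail that the paper leaves implicit.
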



The efficacy of the protocol depends on the Kraus representation $\{K_i\}_{i=1}^{n}$. The following example illustrates that for a given channel the issue of finding a good Kraus representation, i.e. one for which the sum $\sum_{i=1}^{n}\norm{K^{\dagger}_iK_i}_{\infty}$ is small, can be nontrivial. Let $\Phi$ be the dephasing channel, i.e.
\begin{equation}
    \Phi(\rho) = \sum_{i=1}^{d}\ketbra{i}{i}\rho\ketbra{i}{i}.
\end{equation}
Each of the Kraus operators $K_i=\ketbra{i}{i}$ is a projector and therefore $\norm{K^{\dagger}_iK_i}_{\infty}=1$, which implies that $\sum_{i=1}^{d}\norm{K^{\dagger}_iK_i}_{\infty}=d$ and the success probability given by the protocol is merely $\frac{1}{d}$.  However, it is well-known (\cite[Proposition 4.6]{watrous}) that the dephasing channel is in fact a mixed unitary channel and as such can be simulated perfectly with no ancilla. Note that being mixed unitary implies that there exists a different Kraus representation $\{L_i\}_{i=1}^{n}$, with $L_i = \sqrt{p_i} U_i$ and $U_i$ unitary, such that $\sum_{i=1}^{n}\norm{L^{\dagger}_iL_i}_{\infty}=1$. This naturally invites the following problem.

\begin{problem}
     Given a channel $\Phi$, find the Kraus representation $\{K_i\}_{i=1}^{n}$ and the partition for which the sum $ \sum_{\alpha=1}^{r} \norm{\sum_{i\in X_{\alpha}}K^{\dagger}_iK_i}_{\infty}$ is minimal, given that $\vert X_{\alpha}\vert \leq k$ for each $\alpha$. 
\end{problem}

Since by \cref{prop:kraus-unitary} any two Kraus representations are related by a unitary operator, the above problem can be equivalently posed as follows -- for a given Kraus representation $\{K_i\}_{i=1}^{n}$ find a unitary $U$ such that for new Kraus operators $\{L_i\}_{i=1}^{m}$ given by
    \begin{equation}\label{eq:new-kraus-representation}
        L_i = \sum_{j=1}^{n}U_{ij}K_j
    \end{equation}
    the sum $\sum_{\alpha=1}^{r} \norm{\sum_{i\in X_{\alpha}}L^{\dagger}_iL_i}_{\infty}$ is small.

\section{Simulation for general unital channels}\label{sec:logd-simulation}

Let us specialize to the class of unital channels (recall \cref{def:unital}). This section is devoted to applying the simulation protocol from \cref{prop:simulation-protocol} to prove one of our main results:

\begin{theorem}\label{th:prob-succ-groups-k}
    Let $\Phi$ be a unital channel in dimension $d$ and let $k = O(\log d)$. Then there exists a Kraus representation $\{L_i\}_{i=1}^{n}$ of $\Phi$ such that the simulation protocol from \cref{prop:simulation-protocol} with ancilla dimension $k+1$ achieves success probability at least $q_{\mathrm{succ}} = \Omega\left( \frac{k}{\log d} \right)$.
\end{theorem}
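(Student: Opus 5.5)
Starting from any Kraus representation $\{K_j\}_{j=1}^{n}$ of $\Phi$ (pad with zero operators so that $n=rk$ for a suitable $r$, to be chosen as large as needed, e.g. $n \geq d^2$), I will pick a Haar random unitary $U \in U(n)$ and define new Kraus operators $L_i = \sum_j U_{ij} K_j$ as in \eqref{eq:new-kraus-representation}. The indices are then partitioned into $r$ consecutive blocks $X_\alpha$ of size $k$. By \cref{prop:simulation-protocol} it suffices to show that with positive probability
\begin{equation*}
\sum_{\alpha=1}^{r}\norm{\sum_{i\in X_\alpha} L_i^\dagger L_i}_\infty = O\left(\frac{\log d}{k}\right)
\end{equation*}
when $k=O(\log d)$. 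In fact I will aim for the stronger statement that each block satisfies $\norm{\sum_{i\in X_\alpha} L_i^\dagger L_i}_\infty \leq C\frac{\log d}{r}$ with probability at least $1-\frac{1}{2r}$, and then take a union bound over the $r$ blocks. Since $rk=n$, this gives the sum bound $C\log d$. That is only $O(\log d)$, not $O(\log d/k)$, so I need to recheck the normalisation.

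The expectation of a block is $\E \sum_{i\in X_\alpha}L_i^\dagger L_i = \frac{k}{n}\sum_j K_j^\dagger K_j = \frac{k}{n}\id_d$. Summed over the $r=n/k$ blocks this gives total $1$, which would correspond to $q_{\mathrm{succ}}=1$. So the real target is a per-block bound of the form $\frac{k}{n}\cdot O\!\left(1+\frac{\log d}{k}\right)$, i.e. a mean of $\frac{k}{n}$ plus a fluctuation of order $\frac{\log d}{n}$. Summing over blocks then yields $O(1+\frac{\log d}{k})$, hence $q_{\mathrm{succ}}=\Omega(\min\{1,k/\log d\})$. This is exactly the shape of a matrix Chernoff bound: a sum of $k$ random PSD terms, each of norm $R$, with mean $\mu=\frac{k}{n}$, has norm at most roughly $e\mu + O(R\log d)$.

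The single-term norm is where unitality enters. Write $L_i = (u_i^{T}\otimes \id)\,\mathbf{K}$, where $\mathbf{K}=\sum_j \ket{j}\otimes K_j$ is the Stinespring isometry and $u_i$ is the $i$-th row of $U$. Then $L_i^\dagger L_i \preccurlyeq \norm{L_i}^2 \id$, and I also need $\norm{L_i L_i^\dagger}_\infty$ to be small. Unitality means $\sum_j K_jK_j^\dagger=\id$, so the matrix $\widetilde{\mathbf{K}}=[K_1,\dots,K_n]$ is a co-isometry. Then $L_i = \widetilde{\mathbf{K}}(\bar u_i\otimes \id)$-type expressions give $\norm{L_i}_\infty\le \norm{(u_i\otimes\id)}$, which is merely $1$. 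A sharper route is needed: for a fixed unit vector $\psi$, the quantity $\norm{L_i\psi}^2 = |\langle u_i, v_\psi\rangle|^2$-type quadratic forms concentrate at $\frac{1}{n}$. Using both $\sum_j K_j^\dagger K_j=\id$ and $\sum_j K_jK_j^\dagger=\id$, I expect $\E L_iL_i^\dagger = \frac1n\id$ as well. Gaussian or Haar concentration with a net argument (or a noncommutative Khintchine bound) should then give $\norm{L_i}_\infty^2 = O\!\left(\frac{d+\log n}{n}\right)$, which is at most $O(\frac{\log d}{n})\cdot$poly after choosing $n$ large enough. Indeed $n$ is at my disposal through zero-padding, so I can take $n\gg d$, and then $R=O(\frac{\log d}{n})$ is plausible.

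Rows of a Haar unitary are not independent, so to apply the matrix Chernoff bound I will either replace $U$ by an i.i.d. Gaussian matrix $G/\sqrt n$ and correct for the approximate isometry via $\norm{G^\dagger G/n-\id}$ small (rescaling the representation by a factor $1+o(1)$ after whitening), or use the matrix martingale inequalities of Tropp over the sequential revelation of rows within a block, as the paper hints. I expect the main obstacle to be controlling the dependence and the per-term bound $R$ uniformly over all $n$ indices simultaneously. My first attempt will be the Gaussian surrogate $L_i=\frac{1}{\sqrt n}\sum_j g_{ij}K_j$, followed by an exact re-normalisation $L\mapsto L\,(\sum L_i^\dagger L_i)^{-1/2}$ using unitality-free matrix concentration for $\sum_i L_i^\dagger L_i$.
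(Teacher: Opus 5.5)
\textbf{There is a genuine gap: the probabilistic core of your plan cannot reach the $\Omega\left(\frac{k}{\log d}\right)$ rate.} Your reduction is exactly the paper's: Haar-rotate the Kraus representation, cut the indices into consecutive blocks of size $k$, show each block has norm $O(\frac{k+\log d}{n})$ with failure probability $o(1/r)$, then take a union bound.

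The first problem is your per-term bound. The net argument only gives $\norm{L_i}_\infty^2=O(\frac{d+\log n}{n})$, and zero-padding cannot repair this. What matters is the ratio $n\norm{L_i}_\infty^2$ to the mean $\frac1n$, and padding does not decrease it. Summing $n$ such terms gives order $d$, not $O(\log d)$.

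The second problem is more fundamental: no uniform per-term bound $R$ works in a bounded-term matrix Chernoff inequality. By your own heuristic $e\mu+O(R\log d)$, a fluctuation of order $\frac{\log d}{n}$ needs $R=O(\frac1n)$. That is false in general. For a unitary channel $X\mapsto VXV^\dagger$ padded with zeros, $L_i=u_{i1}V$, and $\max_i|u_{i1}|^2\asymp\frac{\log n}{n}$ with high probability. With the best available $R\asymp\frac{\log(dn)}{n}$ (e.g.\ from noncommutative Khintchine plus Gaussian concentration), the Chernoff tail $d\,(e\mu/t)^{t/R}$ drops below $d^{-2-\delta}$ only once $t$ is at least of order $\frac{\log^2 d}{n\log\log d}$. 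You therefore lose a factor of order $\frac{\log d}{\log\log d}$ in $q_{\mathrm{succ}}$.

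\textbf{The missing idea is to control the whole matrix moment generating function of a single term}, i.e.\ to show that $L^\dagger L$ is matrix sub-exponential at scale $\frac1D$. Take $L=\sum_{j=1}^{D}z_jA_j$ with $z$ uniform on the unit sphere of $\C^D$, $\sum_jA_j^\dagger A_j\preccurlyeq\id_d$ and $\sum_jA_jA_j^\dagger\preccurlyeq\id_d$. The paper proves $\E(L^\dagger L)^\ell\preccurlyeq\frac{\ell!}{D(D+1)\cdots(D+\ell-1)}\id_d$, hence $\E e^{\theta L^\dagger L}\preccurlyeq(1-\theta/D)^{-1}\id_d$ for $\theta\in[0,D)$.
\begin{itemize}
\item The moment formula for $z$ writes $\E(L^\dagger L)^\ell$ as a sum over $\pi\in S_\ell$ of words $\sum_{j_1,\dots,j_\ell}A_{j_1}^\dagger A_{j_{\pi(1)}}\cdots A_{j_\ell}^\dagger A_{j_{\pi(\ell)}}$. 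Each word has norm at most $1$ by the Bordenave--Collins inequality. This is where unitality genuinely enters; knowing $\E L_iL_i^\dagger=\frac1n\id_d$ is not enough.
\item Dependence between rows is handled by Tropp's matrix martingale Laplace-transform bound. Conditionally on $u_1,\dots,u_{i-1}$, the row $u_i$ is uniform on the sphere of the orthogonal complement. The operators obtained from the $K_j$ by a partial isometry still satisfy both two-sided bounds, with $D=n-i+1$.
\item Choosing $\theta=\varepsilon(n-k+1)$ and $t=C\frac{\log d}{n}$ gives the tail $d\exp\{-\log d\,(C\varepsilon(1-o(1))-\alpha\log\frac{1}{1-\varepsilon})\}$ for $k\le\alpha\log d$. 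The $k$ summands cost only $k\log\frac{1}{1-\varepsilon}=O(\log d)$ in the exponent, with no extra logarithm.
\end{itemize}

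\textbf{Your fallback renormalisation $L_i\mapsto L_i(\sum_iL_i^\dagger L_i)^{-1/2}$ is not legitimate.} Right-multiplying all Kraus operators by a fixed matrix changes the channel, so the result is no longer a Kraus representation of $\Phi$. Any correction must act on the environment index, for example by taking the polar part of a tall Gaussian matrix. Even then, the sub-exponential MGF bound is still needed.
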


Let us state for further reference two immediate corollaries of the theorem.

\begin{corollary}\label{cor:unital-1-log-d}
    Any unital channel in dimension $d$ can be simulated with success probability $\Omega\left( \frac{1}{\log d} \right)$ by using only one additional qubit.
\end{corollary}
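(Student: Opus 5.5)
The statement to prove is \cref{cor:unital-1-log-d}, which is the special case $k=1$ of \cref{th:prob-succ-groups-k}.

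With $k=1$, every subset in the partition of \cref{prop:simulation-protocol} is a singleton. The protocol therefore uses ancilla dimension $k+1=2$, i.e.\ a single additional qubit, and this is exactly the setting of \cref{cor:main-corollary}. So I would invoke \cref{th:prob-succ-groups-k} with $k=1$, which is allowed since $1=O(\log d)$. It supplies a Kraus representation $\{L_i\}$ of the unital channel $\Phi$ for which the one-qubit protocol has success probability $\Omega(1/\log d)$. Equivalently, it guarantees
\begin{equation}
\sum_i \norm{L_i^\dagger L_i}_\infty = O(\log d).
\end{equation}
This is the claimed bound.

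If I instead had to prove the $k=1$ case directly, I would proceed as follows.

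\emph{Setup.} Start from any Kraus representation $\{K_j\}_{j=1}^{n}$ of $\Phi$, padded with zero operators to length $m \gtrsim n$ as permitted by \cref{prop:kraus-unitary}. Take a Haar random isometry $V$ and define $L_i=\sum_j V_{ij}K_j$.

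\emph{Expectation.} Each $L_i^\dagger L_i$ has expectation $\frac1m\sum_j K_j^\dagger K_j=\frac1m\id$.

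\emph{Concentration.} Next I would bound $\norm{L_i^\dagger L_i}_\infty=\norm{L_i}_\infty^2$. The matrix $L_i$ is, up to normalization, a Gaussian-like matrix series $\sum_j g_j K_j/\sqrt m$. Because $\sum_j K_j^\dagger K_j=\id$ and, by unitality, $\sum_j K_jK_j^\dagger=\id$, its matrix variance parameter is $\frac1m$ on both sides. The noncommutative Khintchine inequality, or Tropp's matrix tail bounds, then give $\norm{L_i}_\infty^2=O(\log d/m)$ with high probability. Summing over the $m$ operators gives total $O(\log d)$, hence $q_{\mathrm{succ}}=\Omega(1/\log d)$.

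\emph{Main obstacle.} The delicate step is replacing the Haar isometry entries by independent Gaussians. Columns of a Haar isometry are not independent, so I would either Gaussianize via the polar decomposition, or use the matrix martingale inequalities of Tropp as the paper indicates.

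It also suffices to bound the expectation of $\sum_i\norm{L_i}^2_\infty$ rather than to control all $m$ terms simultaneously: some realization of $V$ achieves at most the expected value.
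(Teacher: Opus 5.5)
Your proposal is correct and matches the paper's argument: the corollary is the $k=1$ case of \cref{th:prob-succ-groups-k}. There the partition in \cref{prop:simulation-protocol} consists of singletons, so the ancilla dimension is $k+1=2$, i.e.\ a single qubit.

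In your optional direct sketch, the Gaussianization step you flag as the main obstacle is harmless for $k=1$. Both the union bound and the expectation argument involve only the marginal law of a single row, which is that of $g/\norm{g}$; this is exactly how the paper argues in the proof of \cref{th:unital-improved}.
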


\begin{corollary}\label{cor:unital-log-n-qubits}
    Any unital channel acting on a system of $n$ qubits can be simulated with constant (dimension-independent) success probability by using only $O(\log n)$ ancillary qubits.
\end{corollary}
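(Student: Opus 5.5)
This is a direct specialization of \cref{th:prob-succ-groups-k} to qubit systems. For a system of $n$ qubits the Hilbert space dimension is $d = 2^n$, so $\log d = n\log 2 = \Theta(n)$. I will choose the block size $k$ in the partition protocol of \cref{prop:simulation-protocol} proportional to $\log d$. Concretely, set $k := \lceil c\, n \rceil$ for a constant $c>0$ small enough that $k = O(\log d)$, which is the hypothesis of \cref{th:prob-succ-groups-k}.

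With this choice, \cref{th:prob-succ-groups-k} gives a Kraus representation $\{L_i\}$ of the unital channel $\Phi$ and a partition into blocks of size at most $k$. The resulting protocol uses ancilla dimension $k+1$ and succeeds with probability
\begin{equation*}
q_{\mathrm{succ}} = \Omega\left(\frac{k}{\log d}\right) = \Omega\left(\frac{c\,n}{n\log 2}\right) = \Omega(1).
\end{equation*}
The implied constant depends only on $c$ and on the universal constant in the theorem, not on $n$. The success probability is therefore dimension-independent.

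It remains to count qubits. An ancilla of dimension $k+1$ embeds into $\lceil \log_2(k+1)\rceil$ qubits: pad the Stinespring environment with unused basis states, which does not change the implemented channel $\Phi^{(\alpha)}$. Since $k+1 = O(n)$, this is $\log_2 n + O(1) = O(\log n)$ qubits. The flag qubit is already an output register in \cref{def:postselection} and adds only one more qubit in any case.

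There is essentially no obstacle. The only point to check is that the hypothesis $k = O(\log d)$ permits $k$ to be a constant multiple of $\log d$ rather than $o(\log d)$; this is exactly the regime the theorem is stated for. If one insists on $k \le \log d$, take $c \le \log 2$, and the bound $\Omega(k/\log d)$ is still a positive constant.
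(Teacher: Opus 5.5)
Your proposal is correct and matches the paper's intended argument. The paper states this as an immediate corollary of \cref{th:prob-succ-groups-k}: set $d=2^n$ and take $k=\Theta(\log d)=\Theta(n)$, which gives $q_{\mathrm{succ}}=\Omega(1)$ while the ancilla dimension $k+1=O(n)$ fits in $O(\log n)$ qubits; your remarks on padding the environment to a power of two and on the flag qubit only affect constants.
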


The Kraus representation $\{L_i\}_{i=1}^{n}$ from \cref{th:prob-succ-groups-k} will be obtained by transforming an arbitrary Kraus representation of $\Phi$ by a random unitary. Namely, let $\{K_i\}_{i=1}^{n}$ be any Kraus representation of $\Phi$ and let $U = (u_{ij})_{i,j=1}^{n}$ be an $n$-dimensional Haar random unitary. We define
\begin{equation}\label{eq:li-def}
    L_i = \sum\limits_{j=1}^{n} u_{ij} K_j, \quad i = 1, \ldots, n.
\end{equation}

Recalling the discussion from the previous section, to prove \cref{th:prob-succ-groups-k} we have to ensure that any group of size $k$, such as for example $\sum_{i=1}^{k} L_{i}^{\dagger} L_{i}$, will have small operator norm. The key ingredient in proving that is the following tail bound:

\begin{proposition}\label{prop:k-groups-tail-bound}
Let $k \leq \alpha \log d$, where $\alpha > 0$ is a constant independent of $d$. Then for $L_i$ defined as in \eqref{eq:li-def} we have for all $d$ large enough
\begin{equation}
    \Pp\left( \norm{ \sum\limits_{i=1}^{k} L^{\dagger}_{i} L_i }_{\infty} > C \frac{\log d}{n} \right) \leq \frac{1}{d^{2+\delta}},
\end{equation}
where $C > 0$, $\delta > 0$ are constants depending only on $\alpha$.
\end{proposition}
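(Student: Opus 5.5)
We reduce the statement to a concentration bound for a matrix martingale built from the rows of the Haar unitary. Write $u_i = (u_{i1},\dots,u_{in})$ for the $i$-th row of $U$. We have $L_i = \sum_j u_{ij}K_j = (u_i\otimes \id_d)^{\top}$ applied to the stacked column $V := (K_1;\dots;K_n)$, where $V:\C^d\to\C^n\otimes\C^d$ is an isometry because $\sum_j K_j^\dagger K_j=\id_d$. Then
\[
\sum_{i=1}^k L_i^\dagger L_i = V^\dagger (P\otimes \id_d) V, \qquad P := \sum_{i=1}^k \overline{u_i}^{\top}\overline{u_i},
\]
and $P$ is a Haar random rank-$k$ projection on $\C^n$. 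Its expectation is $\E P = \frac{k}{n}\id_n$, so $\E \sum_{i\le k} L_i^\dagger L_i = \frac{k}{n}\id_d$. Since $k\le \alpha\log d$, this mean is already of order $\frac{\log d}{n}$. The task is therefore a deviation bound of order $\frac{\log d}{n}$, with failure probability $d^{-2-\delta}$.

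First, replace the Haar rows by Gaussian vectors. Let $g_1,\dots,g_k$ be i.i.d. standard complex Gaussian vectors in $\C^n$. The rows $u_1,\dots,u_k$ are obtained from them by Gram–Schmidt, and $P$ is the projection onto their span. Writing $G=\sum_i g_ig_i^\dagger$, we have $P\preccurlyeq G/\lambda_{\min}(G|_{\mathrm{span}})$. For $k\le \alpha\log d \ll n$, the smallest nonzero eigenvalue of $G$ is at least $n/2$ except with probability $e^{-cn}$. In the regime $n\gtrsim \log d$ this is $\le d^{-3}$. If $n$ is small, the claim is trivial because the norm is at most $1\le C\log d/n$. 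It therefore suffices to bound $\norm{\sum_{i=1}^k M_i}_\infty$, where
\[
M_i := \frac1n \Big(\sum_j \overline{g_{ij}}K_j\Big)^\dagger\Big(\sum_j \overline{g_{ij}} K_j\Big) = \frac1n A_i^\dagger A_i .
\]
The $M_i$ are i.i.d. PSD matrices with $\E M_i=\frac1n\id_d$.

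Second, control each $A_i = \sum_j h_j K_j$ with $h$ Gaussian. This is a Gaussian matrix series, so Tropp's bound applies. Its variance parameter is $\max\{\norm{\sum_j K_j^\dagger K_j},\norm{\sum_j K_jK_j^\dagger}\}=1$, and here unitality is used essentially. Tropp's bound gives $\Pp(\norm{A_i}_\infty>t)\le 2d\,e^{-t^2/2}$. Hence $\norm{M_i}\le R:=\frac{c\log d}{n}$ except with probability $d^{-4}$ for each $i$, and a union bound over $k\le\alpha\log d$ indices costs only a $\log d$ factor.

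Third, apply the matrix Chernoff inequality to the sum of $k$ independent PSD matrices truncated at $R$; alternatively, apply Freedman's matrix martingale inequality to the Doob martingale of the exposure of $g_1,\dots,g_k$. The mean is $\mu_{\max}=k/n$, and Chernoff gives
\[
\Pp\Big(\norm{\textstyle\sum M_i}\ge t\Big)\le d\,(e\mu_{\max}/t)^{t/R}.
\]
Take $t=C\log d/n$ with $C$ large relative to $\alpha$ and $c$. Then $t/R = C/c$, and the ratio $e\mu_{\max}/t\le e\alpha/C$ is small, so the bound is $d\cdot (e\alpha/C)^{C/c}$. This is a constant, not a power of $d$, so the naive choice fails.

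The main obstacle is exactly this step. The truncation level $R\sim \log d/n$ matches the target $t$, so Chernoff yields only constant-order exponents. To fix it, I would avoid truncating at the worst-case level. Instead, apply Chernoff with the exact moment generating function: $\E e^{\theta M_i}$ for $M_i=\frac1n A_i^\dagger A_i$ can be bounded via Gaussian integration, with $\E e^{\theta\, h^\dagger Q h}$-type estimates. For $\theta\sim n/4$ this gives $\log\E e^{\theta M_i}\preccurlyeq O(1)\id$ uniformly, using $\sum_j K_j^\dagger K_j=\sum_j K_jK_j^\dagger=\id$. The Laplace-transform method then yields
\[
\Pp\big(\norm{\textstyle\sum_{i\le k} M_i}\ge t\big)\le d\,e^{O(k)-\theta t} = d\,e^{O(\alpha\log d)-Cn t/4}.
\]
Choosing $C$ large in terms of $\alpha$ makes this at most $d^{-2-\delta}$. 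Verifying this uniform mgf bound, a noncommutative Hanson–Wright-type estimate, is where I expect the real work to lie.
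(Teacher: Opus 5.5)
Your reductions are sound, and the Gaussian decoupling is a legitimate alternative to the paper's handling of the dependent Haar rows. You couple the rank-$k$ projection $P$ with the span of i.i.d.\ Gaussian vectors, use $P\preccurlyeq G/\lambda^{+}_{\min}(G)$ with a Wishart lower bound, and dispose of $n\lesssim\log d$ via $\norm{\sum_{i\le k}L_i^\dagger L_i}_\infty\le 1$. This gives you independent summands. The paper instead keeps the Haar rows and applies Tropp's master bound for adapted sequences, using that $u_i$ is uniform on the unit sphere of the orthogonal complement of $u_1,\dots,u_{i-1}$.

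However, the proof is not complete. Everything rests on the per-term bound $\E e^{cA^\dagger A}\preccurlyeq O(1)\,\id_d$ for $A=\sum_j h_jK_j$ and a fixed $c<1$. You leave this unverified, and the tool you point to cannot supply it. A Hanson--Wright estimate controls $\E e^{\theta\langle x,A^\dagger Ax\rangle}$ for a fixed unit $x$. But Jensen on the spectral measure gives $\langle x,e^{\theta A^\dagger A}x\rangle\ge e^{\theta\langle x,A^\dagger Ax\rangle}$, which is the wrong direction. What you need is control of all operator moments $\E(A^\dagger A)^\ell$, which are degree-$2\ell$ polynomials in $h$ with noncommuting coefficients. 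Crude bounds through $\norm{A}_\infty$ also fail. For the epsilon-net channel, $\norm{A}_\infty^2\asymp\log d$, so each factor would be polynomially large in $d$, which is fatal after multiplying over $k\asymp\log d$ terms.

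The missing idea is the content of \cref{lem:words} and \cref{lem:mgf-bound}:
\begin{enumerate}
\item Expand $\E e^{cA^\dagger A}=\sum_{\ell}\frac{c^\ell}{\ell!}\E(A^\dagger A)^\ell$.
\item By Wick's formula, $\E(A^\dagger A)^\ell=\sum_{\pi\in S_\ell}\sum_{j_1,\dots,j_\ell}K_{j_1}^\dagger K_{j_{\pi(1)}}\cdots K_{j_\ell}^\dagger K_{j_{\pi(\ell)}}$. The paper uses the symmetric-subspace formula for a uniform vector on the sphere in $\C^D$, which adds a factor $1/(D(D+1)\cdots(D+\ell-1))$.
\item Each permutation word has operator norm at most $1$ by \cite[Theorem 4.1]{bordenave-collins}. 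This requires \emph{both} $\sum_jK_j^\dagger K_j\preccurlyeq\id_d$ and $\sum_jK_jK_j^\dagger\preccurlyeq\id_d$. This step, not the variance parameter in Tropp's Gaussian series bound, is where unitality is really used.
\end{enumerate}
The result is $\E(A^\dagger A)^\ell\preccurlyeq\ell!\,\id_d$, hence $\E e^{\theta M_i}\preccurlyeq(1-\theta/n)^{-1}\id_d$ for $\theta<n$.

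With this lemma your route closes. Operator monotonicity of $\log$ and Tropp's subadditivity of matrix cumulant generating functions for independent summands give $\Pp(\norm{\sum_{i\le k}M_i}_\infty\ge t)\le d\,(1-\theta/n)^{-k}e^{-\theta t}$. Taking $\theta=\varepsilon n$ and $t=C\log d/n$ reproduces essentially the paper's final estimate. Without the lemma, the proposal stops at exactly the step that carries the proof.
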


Let us first see how the proposition enables us to prove \cref{th:prob-succ-groups-k}.

\begin{proof}[Proof of \cref{th:prob-succ-groups-k}]
    Let $\{K_i\}_{i=1}^{n}$ be any Kraus representation of $\Phi$ and let $\{L_i\}_{i=1}^{n}$ be as in \eqref{eq:li-def}. Without loss of generality we can assume $n \leq d^2$. By padding the Kraus representation with zero operators if needed we can assume that $n$ is divisible by $k$, so that $r = \frac{n}{k}$ is an integer. Let us consider a partition of $\{1, \ldots, n\}$ into groups $\{X_\alpha\}_{\alpha=1}^{r}$ of size $k$, where $X_\alpha$ is given by $\{ (\alpha - 1) k + 1, \ldots, (\alpha - 1) k + k \}$. Recall that if we use $\{L_i\}_{i=1}^{n}$ as a Kraus representation of $\Phi$, the success probability of the protocol from \cref{prop:simulation-protocol} will be given by
    \begin{equation}
        q_{\mathrm{succ}} = \frac{1}{\sum_{\alpha=1}^{r} \norm{\sum_{i\in X_{\alpha}}L^{\dagger}_i L_i}_{\infty}}.
    \end{equation}
    Thus to obtain success probability $\Omega\left( \frac{k}{\log d} \right)$ it is enough to prove that with positive probability over the choice of $U$ we have an upper bound of the form
    \begin{equation}\label{eq:norm-bound}
    \sum_{\alpha=1}^{r} \norm{\sum_{i\in X_{\alpha}}L^{\dagger}_i L_i}_{\infty} = O\left( \frac{\log d}{k} \right).
    \end{equation}
    Since we have $r$ groups and $r k = n$, by a union bound we get for any $C > 0$
    \begin{equation}\label{eq:groups-k-union-bound}
        \Pp\left( \sum_{\alpha=1}^{r} \norm{\sum_{i\in X_{\alpha}}L^{\dagger}_i L_i}_{\infty} > C \frac{\log d}{k} \right) \leq \sum\limits_{\alpha=1}^{r} \Pp\left( \norm{\sum\limits_{i \in X_\alpha} L^{\dagger}_i L_i}_{\infty} > C \frac{\log d}{n} \right).
    \end{equation}
    Note that under a random choice of $U$ the marginal distribution of each term $\norm{\sum_{i\in X_{\alpha}}L^{\dagger}_i L_i}_{\infty}$ is the same. Thus we have to bound
    \begin{equation}
        r \cdot \Pp\left( \norm{\sum\limits_{i =1}^{k} L^{\dagger}_i L_i}_{\infty} > C \frac{\log d}{n} \right).
    \end{equation}
    Since we assume $k = O(\log d)$, \cref{prop:k-groups-tail-bound} implies that for a suitable choice of $C>0$ and $\delta > 0$ the probability above is bounded by $d^{-(2+\delta)}$ for large enough $d$. As $r =  \frac{n}{k} \leq d^2$, we obtain
    \begin{equation}
        r \cdot \Pp\left( \norm{\sum\limits_{i =1}^{k} L^{\dagger}_i L_i}_{\infty} > C \frac{\log d}{n} \right) \leq d^2 \cdot d^{-(2 + \delta)} = o(1)
    \end{equation}
    as $d \to \infty$. This implies that the bound \eqref{eq:norm-bound}
 holds with high probability over the choice of $U$, which finishes the proof.
 \end{proof}

It remains to prove \cref{prop:k-groups-tail-bound}. The proof will rely on estimating matrix-valued moment generating functions using lemmas of technical nature, whose proofs we give in \cref{sec:appendix}. 

\begin{proof}[Proof of \cref{prop:k-groups-tail-bound}]
We will employ the setup of matrix-valued random processes from \cite{tropp2011}. Let $u_1, \ldots, u_n$ denote the rows of $U$ and let $\FF_i$ be the sigma-algebra generated by $\{u_1, \ldots, u_i\}$. Note that conditionally on $\FF_{i-1}$, the row $u_{i}$ is distributed uniformly on the unit sphere in the (random) subspace $H_{i}$ orthogonal in $\C^n$ to $\Span\{u_1, \ldots, u_{i-1}\}$. Let $\E_{i}$ denote the conditional expectation with respect to $\FF_i$.

We will apply \cite[Theorem 2.3]{tropp2011} with $\mathbf{V}_i = \id_d$ and $w=k$ -- in this case it says that if for each $i=1, \ldots, k$ and some $\theta>0$ we have a bound of the form
\begin{equation}\label{eq:cond-mgf}
    \E_{i-1} e^{\theta L_{i}^{\dagger}L_{i}} \preccurlyeq e^{g(\theta) \id_d},
\end{equation}
then the following tail bound holds
\begin{equation}\label{eq:tail-bound-theta}
    \Pp\left( \norm{ \sum\limits_{i=1}^{k} L^{\dagger}_{i} L_i }_{\infty} \geq t \right) \leq d e^{- \theta t + g(\theta)k}.
\end{equation}

Let $V^{(i)}: \C^n \to \C^n$ be any isometry mapping $H_i$ to the subspace $\{(v, 0, \ldots, 0) \, | \, v \in \C^{n-i+1}\}$. Note that the conditional distribution of $z = V^{(i)} u_i$ is uniform on the unit sphere in this subspace, which for simplicity we identify with $\C^{n-i+1}$. Thus conditionally on $\FF_{i-1}$ the operator
\begin{equation}
    L_i = \sum\limits_{j=1}^{n} u_{ij} K_j
\end{equation}
has the same distribution as 
\begin{equation}
    A^{(i)} = \sum\limits_{j=1}^{n-i+1} z_j A_{j}^{(i)},
\end{equation} where $A^{(i)}_{j} = \sum\limits_{\ell=1}^{n} \overline{V}^{(i)}_{j \ell} K_\ell$. Note that the operators $\{A_{j}^{(i)}\}_{j=1}^{n}$ are related to $\{K_{j}\}_{j=1}^{n}$ by a partial isometry, so they satisfy the assumptions of \cref{lem:words} and \cref{lem:mgf-bound}. Hence \cref{lem:mgf-bound} applied with $L = A^{(i)}$ and $D = n-i+1$ implies that for any $\theta \in [0, n-i+1)$ we have
\begin{equation}
    \E_{i-1} \, e^{\theta L_{i}^{\dagger} L_i} \preccurlyeq \frac{1}{1 - \frac{\theta}{n-i+1}} \id_d \preccurlyeq \frac{1}{1 - \frac{\theta}{n-k+1}} \id_d.
\end{equation}
Together with \eqref{eq:tail-bound-theta} this leads to
\begin{equation}\label{eq:theta-bound-with-k}
    \Pp\left( \norm{ \sum\limits_{i=1}^{k} L^{\dagger}_{i} L_i }_{\infty} \geq t \right) \leq d e^{- \theta t} \left( \frac{1}{1 - \frac{\theta}{n-k+1}} \right)^k.
\end{equation}
Let us now take $t = C \frac{\log d}{n}$, $\theta = \varepsilon (n-k+1)$ for constants $C > 0$, $\varepsilon \in (0,1)$ to be adjusted later. The right hand side of \eqref{eq:theta-bound-with-k} is at most
\begin{equation}
    d \exp\left\{- C\varepsilon(n-k+1) \frac{\log d}{n} + k \log \frac{1}{1-\varepsilon}\right\} \leq
    d \exp \left\{-\log d \left( C\varepsilon \frac{n - \alpha \log d + 1}{n}  - \alpha \log \frac{1}{1-\varepsilon}\right)\right\}
\end{equation}
We can assume $n \geq d$ (otherwise split any Kraus operator $K_i$ into appropriately many rescaled copies), so we can always choose $\varepsilon$ small enough and $C$ large enough, depending on $\alpha$ only, so that for all $d$ large enough the coefficient of the logarithm is strictly greater than $3$, which proves the proposition. 
\end{proof}

\section{Simulation with one qubit for highly noncommutative channels}\label{sec:noncomm}

The results of previous sections, in particular \cref{cor:unital-log-n-qubits}, imply that any unital channel $\Phi$ on $n$ qubits can be simulated with constant success probability using $\log n$ ancilla qubits. This in general cannot be further reduced, which we show in \cref{sec:epsilon-net}. Nevertheless, for many channels one can, in fact, simulate them with constant probability using only one additional qubit. The crucial property is that the Choi matrix $J(\Phi)$ should be ``well spread'', as measured by its operator norm being small. We call such channels \textit{highly noncommutative}, for reasons rooted in random matrix theory and explained in \cref{sec:highly-noncommutative}.

As a warmup, we first prove that \textit{random channels} can be simulated with constant probability using just one additional qubit. We then state and prove the general result, \cref{th:unital-improved}. The essential mathematical tool used is a refinement of {\it noncommutative Khintchine inequalities} recently discovered in \cite{vanHandel}, which provides an improvement over the matrix concentration inequalities used in \cref{sec:logd-simulation}. These inequalities can be applied in our case if the Kraus operators are highly noncommutative as measured  by a certain random matrix parameter which turns out to be equal to the operator norm of $J(\Phi)$.

\subsection{Random channels}\label{sec:random-channels}

We choose the model of a random channel based on the Stinespring dilation using a Haar random unitary. We fix ancilla dimension equal to $k$, choose a Haar random unitary $U$ acting on $\C^d \otimes \C^{k}$ and trace out the ancilla:
\begin{equation}\label{eq:random-channel}
    \Phi_U(\rho) := \tr_{E}\left( U \rho \otimes \ketbra{0}{0} U^{\dagger} \right).
\end{equation}
The resulting channel $\Phi$ admits a Kraus representation $\{ K_i \}_{i=1}^{k}$ where the operator $K_i$ corresponds to a $d \times d$ submatrix of $U$ obtained from the first $d$ columns and rows $\{(i-1)d+1, \dots, id\}$. 

The above model is arguably the most natural model of choosing a quantum channel at random in the sense that it admits many equivalent natural descriptions, see \cite{random-channels}. In particular, for $k=d^2$ the model is equivalent to choosing a channel uniformly at random from the Lebesgue (flat) measure on the convex set of all quantum channels of given dimensionality.

\begin{theorem}
    Consider a random channel $\Phi$ sampled as above. Then with probability at least $1 - k\exp(-\Omega(d))$ we have
    \begin{equation}
        \sum_{i=1}^{k} \norm{K^{\dagger}_i K_i}_{\infty} < C
    \end{equation}
    for some absolute constant $C>0$. In particular, by \cref{cor:main-corollary} the channel can be simulated with one additional qubit with success probability $\Omega(1)$.
\end{theorem}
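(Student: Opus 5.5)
The plan is to bound each $\norm{K_i}_\infty$ separately and then take a union bound over $i=1,\dots,k$. Since $\norm{K_i^\dagger K_i}_\infty = \norm{K_i}^2_\infty$, it suffices to show that with probability at least $1-\exp(-\Omega(d))$ each single block satisfies $\norm{K_i}_\infty \le C'/\sqrt{k}$ for an absolute constant $C'$. Summing then gives $\sum_{i=1}^k \norm{K_i^\dagger K_i}_\infty \le C'^2 =: C$, and the union bound over the $k$ blocks produces the failure probability $k\exp(-\Omega(d))$.

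\emph{Reduction to a Gaussian matrix.} The first $d$ columns of a Haar unitary on $\C^{dk}$ form a Haar random isometry $V:\C^d\to\C^{dk}$. It can be realized as $V = G(G^\dagger G)^{-1/2}$, where $G$ is a $dk\times d$ matrix with i.i.d.\ standard complex Gaussian entries. The block $K_i$ is then $G_i (G^\dagger G)^{-1/2}$, where $G_i$ is the $i$-th $d\times d$ block of $G$. This gives
\begin{equation}
\norm{K_i}_\infty \le \norm{G_i}_\infty \, \norm{(G^\dagger G)^{-1/2}}_\infty = \frac{\norm{G_i}_\infty}{s_{\min}(G)}.
\end{equation}
We then need two standard estimates. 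First, for a $d\times d$ complex Ginibre matrix, $\E\norm{G_i}_\infty \le 2\sqrt{d}$, and Gaussian concentration (the norm is $1$-Lipschitz in the entries) gives $\Pp(\norm{G_i}_\infty > 3\sqrt d) \le e^{-cd}$. Second, for the tall $dk\times d$ Gaussian matrix $G$, Gordon's inequality together with concentration gives $s_{\min}(G) \ge \sqrt{dk}-\sqrt d - t$ with probability at least $1-e^{-ct^2}$.

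\emph{Main obstacle.} The difficulty is the regime of small $k$. When $k$ is close to $1$, the quantity $\sqrt{dk}-\sqrt d$ is not of order $\sqrt{dk}$, and for $k=1$ the smallest singular value of a square Gaussian matrix is only of order $d^{-1/2}$. The Gaussian estimate therefore fails there, so I would split into two cases.
\begin{itemize}
\item For $k\ge 4$, take $t=\tfrac14\sqrt{dk}$. Then $s_{\min}(G)\ge \tfrac14\sqrt{dk}$ with probability $1-e^{-\Omega(dk)}$, which yields $\norm{K_i}_\infty \le 12/\sqrt{k}$ and hence $\sum_i\norm{K_i}_\infty^2\le 144$.
\item For $k\le 3$, no probability is needed: the trivial bound $\norm{K_i}_\infty\le 1$, which follows from $\sum_i K_i^\dagger K_i=\id_d$, gives $\sum_{i=1}^k \norm{K_i^\dagger K_i}_\infty\le k\le 3$ deterministically.
\end{itemize}
Taking $C=145$ covers both cases, and the overall failure probability is at most $k\cdot 2e^{-cd}$, as required. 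The final claim then follows directly from \cref{cor:main-corollary}, which gives success probability at least $1/C=\Omega(1)$ with one additional qubit.
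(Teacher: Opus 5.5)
Your argument is correct, but it takes a different route from the paper.

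\textbf{The paper's route.} The paper never passes to Gaussian matrices. It imports from \cite{singal} a tail bound for the operator norm of a $d\times d$ submatrix of a Haar unitary on $\C^{dk}$, proved by concentration of measure on the unitary group: $\Pp(\norm{K_i}_\infty > 2c/\sqrt{k}) \le \exp(-c^2 d/12)$ with $c\approx 3.85$. It then takes a union bound over $i$. That bound is uniform in $k$: for $k\le 59$ the threshold exceeds $1$, so the statement is trivially true there. No case split is needed, and the proof is essentially a citation.

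\textbf{Your route.} You realize the Haar isometry as $G(G^\dagger G)^{-1/2}$ and bound $\norm{K_i}_\infty\le \norm{G_i}_\infty/s_{\min}(G)$. Everything then reduces to textbook Gaussian facts: the norm of a square Ginibre matrix, Gordon's lower bound on the smallest singular value of a tall Gaussian matrix, and Gaussian concentration. This makes the proof self-contained. It also makes the $1/\sqrt{k}$ scaling transparent: each numerator is of order $\sqrt d$, and there is a single common denominator of order $\sqrt{dk}$, so only the numerators need a union bound.

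\textbf{The costs of your route.} First, you need the case split at small $k$, where $s_{\min}(G)$ is no longer of order $\sqrt{dk}$. Second, the constants need some care. The sharp forms $\E\norm{G}_\infty\le\sqrt N+\sqrt n$ and $\E s_{\min}(G)\ge\sqrt N-\sqrt n$ are usually stated for real Gaussian matrices. If you want to rely only on the real case, write $G=(X+iY)/\sqrt2$. For a unit vector $v$ one has $\norm{Gv}^2=\tfrac12(\norm{Zw_1}^2+\norm{Zw_2}^2)$, where $Z=[X\;Y]$ and $w_1,w_2$ are orthonormal in $\R^{2d}$. This gives $s_{\min}(G)\ge s_{\min}(Z)\ge\sqrt{dk}-\sqrt{2d}-t$ and $\norm{G_i}_\infty\le\norm{[X_i\;Y_i]}_\infty$, whose mean is at most $(1+\sqrt2)\sqrt d<3\sqrt d$. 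Your choices $k\ge 4$, $t=\tfrac14\sqrt{dk}$ and the threshold $3\sqrt d$ still go through, only with a larger absolute constant $C$.

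Both approaches lose a factor $k$ in the failure probability through the union bound over blocks, so both need $k$ subexponential in $d$ for the stated high-probability guarantee.
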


\begin{proof}
Since $ \norm{K^{\dagger}_i K_i}_{\infty} = \norm{K_i}_{\infty}^2$, it suffices to show that with high probability for each $i=1,\dots,k$ we have $\norm{K_i}_{\infty}^2 \leq \frac{C}{k}$ for some constant $C$. First fix $i$. Since each $K_i$ is a $d \times d$ submatrix of a Haar random  $kd \times kd$ matrix $U$, we can use the bounds on the norm of submatrices of a Haar random matrix $U$ which were shown in \cite{singal} using concentration of measure on the unitary group:
\begin{equation}\label{eq: concentration-Ui}
    \Pp\left( \norm{K_i}_{\infty} > \frac{2c}{\sqrt{k}} \right) \leq \exp\left( -\frac{c^2}{12} d \right)
\end{equation}
    for some constant $c\approx 3.85$. 
    The inequality \eqref{eq: concentration-Ui} follows from equation (C.37) in \cite{singal} by putting $t=A$ for $A$ defined therein. Finally, the claim follows by a union bound over $i$.
\end{proof}

The above proof gives high success probability as long as $k$ is subexponential in $d$, but in fact can be made to work for arbitrary $k$ with a slightly longer argument -- instead of a union bound one has to use directly Lipschitz concentration of measure on the unitary group for the function $F(U) = \sum_{i=1}^{k} \norm{K^{\dagger}_i K_i}_{\infty}$. 

\begin{remark}\label{rm:random-channels}
  Random channels discussed in the above model are asymptotically close to unital as soon as $k \to \infty$. How close such channels are to completely depolarizing depends on how $k$ scales with the dimension $d$. For $d^{1+\varepsilon} \ll k \ll d^2$ the norm $\norm{J(\Phi)}_{\infty}$ is typically close to $\frac{1}{d^{\varepsilon}}$ and at the same time their diamond distance from the completely depolarizing channel is close to $2$, i.e. almost maximal \cite{random-channels2}.
\end{remark}

\subsection{Highly noncommutative channels}\label{sec:highly-noncommutative}




\cref{cor:unital-1-log-d} from Section \ref{sec:logd-simulation} shows that any unital channel in dimension $d$ can be simulated with success probability $\Omega(\frac{1}{\log d})$ with only one ancillary qubit. Here we show that for channels with small norm of their Choi matrix $J(\Phi)$ one can obtain an improved bound and get success probability $\Omega(1)$. Note that for any channel the following inequalities hold:
\begin{equation}
  \frac{1}{d} \leq \norm{J(\Phi)}_{\infty} \leq d.
\end{equation}
This is because $\frac{1}{d}J(\Phi)$ is a normalized state on a space of dimension $d^2$, so its largest eigenvalue is between $\frac{1}{d^2}$ (achieved for the completely depolarizing channel) and $1$ (achieved for unitary channels). We will be interested in channels for which $\norm{J(\Phi)}_{\infty} \ll 1$ and we call such channels {\it highly noncommutative} for reasons discussed in \cref{rm:noncomm-explanation}.


\begin{theorem}\label{th:unital-improved}
Let $\Phi$ be a unital channel in dimension $d$ and let $J(\Phi)$ be its Choi matrix. Assume that
\begin{equation}\label{eq:JPhi_small}
        \norm{J(\Phi)}_{\infty} = O\left( \frac{1}{\log^3 d} \right).
    \end{equation}
Then there exists a Kraus representation $\{L_i\}_{i=1}^{n}$ of $\Phi$ such that the simulation protocol from \cref{prop:simulation-protocol} with one ancilla qubit achieves constant (dimension-independent) success probability.
\end{theorem}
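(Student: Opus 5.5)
We follow the strategy of \cref{th:prob-succ-groups-k} with partitions into singletons. Take any Kraus representation $\{K_j\}_{j=1}^{n}$ of $\Phi$ (padded with zeros so that $n\ge d$, and $n\le d^2$ after removing redundancy). Let $U=(u_{ij})$ be Haar random and set $L_i=\sum_j u_{ij}K_j$ as in \eqref{eq:li-def}. By \cref{cor:main-corollary} it suffices to show that, with positive probability, $\sum_{i=1}^n \norm{L_i}_\infty^2 = O(1)$. Since the marginals of all the $L_i$ coincide, it is enough to prove a tail bound $\Pp(\norm{L_1}_\infty^2 > C/n) \le d^{-2-\delta}$ and then take a union bound over the at most $d^2$ indices.

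Next we replace the uniform row by a Gaussian one. The first row $u_1$ is uniform on the sphere in $\C^n$, so $u_1 \stackrel{d}{=} g/\norm{g}_2$ with $g$ a standard complex Gaussian vector. Hence $L_1 \stackrel{d}{=} X/\norm{g}_2$, where $X=\sum_j g_j K_j$ is a Gaussian random matrix. Since $\norm{g}_2^2\ge n/2$ except with probability $e^{-\Omega(n)}\le e^{-\Omega(d)}$, the problem reduces to showing $\norm{X}_\infty = O(1)$ with probability at least $1-d^{-3}$. Writing $g_j=(a_j+ib_j)/\sqrt2$, the matrix $X$ is a real Gaussian series $\sum_j a_j K_j/\sqrt2+\sum_j b_j (iK_j)/\sqrt2$. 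This puts it directly in the setting of the Bandeira--Boedihardjo--van Handel inequality \cite{vanHandel}, which reads
\begin{equation}
\E\norm{X}_\infty \le \norm{\E X^\dagger X}^{1/2}_\infty+\norm{\E XX^\dagger}^{1/2}_\infty + C\,\sigma(X)^{1/2}v(X)^{1/2}(\log d)^{3/4},
\end{equation}
applied to the Hermitian dilation $\begin{pmatrix}0&X\\X^\dagger&0\end{pmatrix}$, which accounts for $X$ not being Hermitian. The same paper supplies the accompanying Gaussian concentration bound $\Pp(\norm{X}_\infty>\E\norm{X}_\infty+\sigma_*(X)t)\le e^{-t^2/2}$.

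The parameters of this inequality are computed as follows. Trace preservation gives $\E X^\dagger X=\sum_j K_j^\dagger K_j=\id_d$, and unitality gives $\E XX^\dagger=\sum_j K_jK_j^\dagger=\id_d$, so $\sigma(X)=O(1)$. The key parameter is $v(X)^2=\norm{\Cov(X)}_\infty$, the covariance of the entries viewed as a $d^2\times d^2$ matrix. Its entries are $\E X_{ab}\overline{X_{a'b'}}=\sum_j (K_j)_{ab}\overline{(K_j)_{a'b'}}$, which is exactly the Choi matrix of $\Phi$ up to reindexing and transposition. Splitting into real and imaginary parts only affects constants, so $v(X)\lesssim \norm{J(\Phi)}_\infty^{1/2}$. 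Under \eqref{eq:JPhi_small} this gives $v(X)^{1/2}(\log d)^{3/4} = O(\norm{J}_\infty^{1/4}\log^{3/4} d)=O(1)$, and therefore $\E\norm{X}_\infty=O(1)$. The weak variance satisfies $\sigma_*(X)\le v(X)=O(\log^{-3/2}d)$. Taking $t=\sqrt{6\log d}$ in the concentration bound yields $\norm{X}_\infty\le O(1)+O(\log^{-1}d)$ with probability at least $1-d^{-3}$. This gives $\norm{L_1}_\infty^2\le C/n$ with the required probability, and the union bound then completes the proof.

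The main obstacle I expect is correctly identifying $v(X)$ with $\norm{J(\Phi)}_\infty$. This requires handling the complex-to-real conversion carefully, since the real covariance involves both $J$ and its partial-transpose-like companion $\sum_j \mathrm{vec}(K_j)\mathrm{vec}(K_j)^T$. That companion must also be bounded by $\norm{J}_\infty$, which should follow from Cauchy--Schwarz: it is a matrix of the form $A A^T$ with $AA^\dagger=J$ up to reindexing. A second point to check is that the logarithmic factor in the BBvH bound is $\log d$ rather than $\log d^2$ when it is applied to the dilation, which only changes constants.
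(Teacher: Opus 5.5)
Your proposal is correct and follows the paper's proof essentially step for step: Haar-rotated Kraus operators with singleton groups, a union bound over $n\le d^2$ indices, replacement of a uniform row by $g/\norm{g}$, and the Bandeira--Boedihardjo--van Handel bound with $\sigma=1$ (by unitality) and $v^2=\norm{J(\Phi)}_\infty$. The only cosmetic difference is that you pair their expectation bound with Gaussian concentration at scale $\sigma_*\le v$, whereas the paper quotes the packaged tail bound \cref{prop:bbvh} through \cref{lem:vanhandel-general} with $t\sim\sqrt{\log d}$. The complex-to-real issue you flag is harmless and in fact disappears: since $\E g_j g_{j'}=0$ for standard complex Gaussians, the pseudo-covariance terms cancel between the $a_j$- and $b_j$-parts, so your companion matrix contributes nothing and $v^2=\norm{J(\Phi)}_\infty$ exactly, matching the paper's identity $\Cov(M)=SJ(\Phi)S$.
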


The condition \eqref{eq:JPhi_small} says that the Choi matrix $J(\Phi)$ is ``well spread'', so it can be understood as the channel being in some sense close to completely depolarizing. This is, however, true only qualitatively. Random channels can satisfy \eqref{eq:JPhi_small} and at the same time have large diamond distance to the completely depolarizing channel (see \cref{rm:random-channels}). It is also easy to give explicit examples, see \cref{ex:holevo}. 


The proof of \cref{th:unital-improved} is based on the same approach as for \cref{th:prob-succ-groups-k} -- given a Kraus representation $\{K_i\}_{i=1}^{n}$ of $\Phi$, we define a new Kraus representation $L_i = \sum\limits_{j=1}^{n} u_{ij} K_j$ by applying a random unitary $U = (u_{ij})_{i,j=1}^{n}$. To bound the operator norms of $L_i$, which control the success probability of the simulation protocol, we employ newly developed tools from random matrix theory, namely the refinement of noncommutative Khintchine inequalities from \cite{vanHandel}.

Let us briefly describe the setup from the random matrix theory and its relation to the relevant parameters of the quantum channel to be simulated. Suppose we have a random matrix


\begin{equation}
M = \sum\limits_{j=1}^{n} g_j K_j,
\end{equation}
where $g_j$ are i.i.d. standard normal variables (which up to normalization play the role of the unitary entries $u_{ij}$) and for now $K_j$ can be arbitrary $d \times d$ complex matrices. We would like to obtain an upper bound on the operator norm $\norm{M}_{\infty}$. This will be expressed in terms of the following two parameters:
\begin{align}
    & \sigma(M)^2 = \max\left\{ \left\Vert \sum_{i=1}^{n} K^{\dagger}_i K_i \right\Vert_{\infty}, \left\Vert \sum_{i=1}^{n}K_i K^{\dagger}_i \right\Vert_{\infty} \right\}, \label{eq:sigma}\\
    & v(M)^2 = \left\Vert \Cov(M) \right\Vert_{\infty}, \label{eq:cov-v}
\end{align}
where $\Cov(M)$ denotes the $d^2 \times d^2$ covariance matrix of entries of $M$:
\begin{equation}\label{eq:covariance}
\Cov(M)_{ij;kl} = \E M_{ij} \overline{M_{kl}}, \ \ i,j,k,l = 1, \ldots, d. 
\end{equation}

Let us provide an interpretation of these parameters in the relevant case when $\{ K_i \}_{i=1}^{n}$ form a Kraus representation of a quantum channel $\Phi$. Then we have $\sum_{i=1}^{n} K^{\dagger}_i K_i = \id_d$ by virtue of $\Phi$ being a quantum channel, so the first term under the maximum in \eqref{eq:sigma} is equal to $1$. The second term measures how far the channel is from being unital -- since $\sum_{i=1}^{n} K_i K^{\dagger}_i = d \cdot \Phi\left( \frac{\id_d}{d} \right)$, which always has norm at least $1$, we obtain
\begin{equation}\label{eq:sigma2}
\sigma(M)^2 = d \left\Vert \Phi\left( \frac{\id_d}{d} \right) \right\Vert_{\infty}.
\end{equation}
As for $v(M)^2$, let $J(\Phi)$ be the Choi matrix of $\Phi$ (recall \eqref{eq:choi}). A direct computation on matrix elements shows that $\Cov(M) = S J(\Phi) S$, where $S$ is the swap operator on $\C^d \otimes \C^d$, $S\ket{ij} = \ket{ji}$. It follows that $\norm{\Cov(M)}_{\infty} = \norm{J(\Phi)}_{\infty}$, so that
\begin{equation}\label{eq:vm2}
    v(M)^2 = \norm{J(\Phi)}_{\infty}.
\end{equation}




\begin{remark}\label{rm:noncomm-explanation}
Let us provide some intuition why the parameter $\norm{J(\Phi)}_{\infty}$, or $v(M)$, quantifies noncommutativity of Kraus operators. We easily see that $v(M)^2 = \sup_{\norm{A}_2=1}\sum_i\vert\tr(A^{\dagger}K_i)\vert^2$, which is the maximum variance of the matrix $M$ along any single direction $A$ in the operator space. If the operators $K_i$ commute, they span a low-dimensional subspace of operators and consequently this maximum variance is large, implying $\norm{J(\Phi)}_{\infty} \geq 1$. In contrast, $K_i$ being noncommutative can force this variance to be spread over many directions and thus small. A detailed discussion in Section 1.4 of \cite{vanHandel} provides a direct explanation of how $v(M)^2$ measures the noncommutativity of $K_i$. We note that a similar, but not identical, quantity called the commutation index has recently been used in the study of disordered Hamiltonians \cite{disordered, SYK}.
\end{remark}


The inequalities from \cite{vanHandel} are based on a comparison between $\norm{M}_{\infty}$ and $\norm{M_{\mathrm{free}}}_{\infty}$, where $M_{\mathrm{free}}$ is a certain operator coming from free probability, associated to $M$. For our purposes it will be only important that the following inequalities are satisfied (see Section 2.1 of \cite{vanHandel})
\begin{equation}\label{eq:m-free}
    \sigma(M) \leq \Vert M_{\mathrm{free}} \Vert_{\infty} \leq \left\Vert \sum\limits_{j=1}^{n} K_j K_j^{\dagger} \right\Vert_{\infty}^{1/2} + \left\Vert \sum\limits_{j=1}^{n} K_j^{\dagger} K_j \right\Vert_{\infty}^{1/2}.
\end{equation}

One last parameter used in \cite{vanHandel} to bound $\norm{M}_{\infty}$ is $\sigma_{\ast}(M)$ -- it will be only important that it satisfies the inequality (again, see Section 2.1 of \cite{vanHandel})
\begin{equation}\label{eq:sigma-ast}
    \sigma_{\ast}(M) \leq \min \left\{\sigma(M), v(M)\right\}.
\end{equation}
Finally, let
\begin{equation}\label{eq:v-tilde}
    \tilde{v}(M) = v(M)^{1/2} \sigma(M)^{1/2}.
\end{equation}
We can now state the general form of the tail bound on $\norm{M}_{\infty}$ from \cite{vanHandel}.
\begin{proposition}[\cite{vanHandel}, Corollary 2.2]\label{prop:bbvh}
    In the setup as above we have
    \begin{equation}\label{eq:vH_mfree}
        \Pp\left( \norm{M}_{\infty} > \norm{M_{\mathrm{free}}}_{\infty} + C \tilde{v}(M) (\log d)^{3/4} + C \sigma_{\ast}(M) t \right) \leq e^{-t^2},
    \end{equation}
    for all $t \geq 0$, where $C > 0$ is a universal constant.
\end{proposition}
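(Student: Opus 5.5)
The statement is Corollary 2.2 of \cite{vanHandel} and we only use it as a black box. If one wanted to reprove it, the natural plan has two parts: an expectation bound, followed by Gaussian concentration around the expectation. The tail term $C\sigma_{\ast}(M)t$ should come purely from concentration, and the remaining two terms from a bound on $\E\norm{M}_{\infty}$. So the target intermediate statement is
\begin{equation*}
\E\norm{M}_{\infty} \leq \norm{M_{\mathrm{free}}}_{\infty} + C\tilde{v}(M)(\log d)^{3/4}.
\end{equation*}

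For the concentration step, view $g=(g_1,\dots,g_n)\mapsto \norm{\sum_j g_j K_j}_{\infty}$ as a function on $\mathbb{R}^n$. Writing the norm as $\sup_{\norm{x}=\norm{y}=1}\mathrm{Re}\,\langle x, M y\rangle$, its Lipschitz constant is
\begin{equation*}
\sup_{\norm{x}=\norm{y}=1}\Bigl(\sum_j \vert\langle x, K_j y\rangle\vert^2\Bigr)^{1/2},
\end{equation*}
which is by definition $\sigma_{\ast}(M)$. The Gaussian concentration inequality for Lipschitz functions then gives $\Pp(\norm{M}_{\infty} > \E\norm{M}_{\infty} + \sigma_{\ast}(M)t)\leq e^{-t^2/2}$. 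Absorbing constants into $C$ yields the claimed form.

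The expectation bound is the main obstacle; it is the core of \cite{vanHandel}. I would first bound $\E\,\tr M^{2p}$ against $\tr\otimes\tau(M_{\mathrm{free}}^{2p})$, where $M_{\mathrm{free}}=\sum_j K_j\otimes s_j$ with $s_j$ a free semicircular family. To do this I would interpolate via $M_t=\sqrt{t}\,M\otimes\id+\sqrt{1-t}\,M_{\mathrm{free}}$ and differentiate in $t$ using Gaussian integration by parts. The derivative is a sum of ``crossing'' terms, in which the pair $K_j,K_j$ is separated by other words. Non-crossing pairings cancel exactly against the free part. The crossing terms are controlled by noncommutative Hölder/trace inequalities in terms of $v(M)$ and $\sigma(M)$, roughly by $p^{3}v(M)^2\sigma(M)^{2}$ times lower moments. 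Integrating this differential inequality gives
\begin{equation*}
(\E\,\tr M^{2p})^{1/2p}\leq (\tr\otimes\tau\, M_{\mathrm{free}}^{2p})^{1/2p}+C p^{3/4}\tilde{v}(M).
\end{equation*}
Finally, take $p\asymp\log d$, so that the trace-to-norm loss $d^{1/2p}$ is a constant. This produces the $(\log d)^{3/4}$ factor and the bound by $\norm{M_{\mathrm{free}}}_{\infty}$. The delicate part is obtaining the $p^{3/4}$ rather than $p^{1/2}$ scaling in the crossing estimate, which is exactly where the refined noncommutative Khintchine analysis is needed.
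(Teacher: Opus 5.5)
Citing the statement from \cite{vanHandel} as a black box is exactly what the paper does; it gives no proof of its own. The concentration half of your sketch is also fine: $g\mapsto\norm{\sum_j g_jK_j}_\infty$ is $\sigma_\ast(M)$-Lipschitz, so Gaussian concentration produces the $C\sigma_\ast(M)t$ term once you know $\E\norm{M}_\infty\le\norm{M_{\mathrm{free}}}_\infty+C\tilde v(M)(\log d)^{3/4}$.

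The expectation part of the sketch, however, breaks at its last step. With $p\asymp\log d$, the conversion $\E\norm{M}_\infty\le d^{1/2p}(\E\,\tr|M|^{2p})^{1/2p}$ (normalized trace) puts a factor $d^{1/2p}=e^{\Theta(1)}$ in front of $\norm{M_{\mathrm{free}}}_\infty$. So you only get $\E\norm{M}_\infty\le C\norm{M_{\mathrm{free}}}_\infty+C\tilde v(M)(\log d)^{3/4}$, not the bound with coefficient $1$. Tuning $p=K\log d$ does not rescue this. The leftover $(d^{1/2p}-1)\norm{M_{\mathrm{free}}}_\infty\ge\sigma(M)/(2K)$ cannot be absorbed into $C\tilde v(M)(\log d)^{3/4}$ when the latter is $o(\sigma(M))$, as for the Werner--Holevo channel of \cref{ex:holevo}, which is exactly where the coefficient $1$ matters. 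On the other hand, taking $p\gg\log d$ makes the additive error $Cp^{3/4}\tilde v(M)$ too large.

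The missing idea is not to pass to the additive form. Suppose the crossing estimate gives $|\frac{d}{dt}m_t|\le Cp^4\tilde v(M)^4m_t^{1-2/p}$ for $m_t=\E\,\tr\otimes\tau|M_t|^{2p}$; this is the count that yields $p^{3/4}$. Integrating $m_t^{2/p}$ then gives $\norm{M}_{2p}^4\le\norm{M_{\mathrm{free}}}_{2p}^4+Cp^3\tilde v(M)^4$, where $\norm{\cdot}_{2p}$ denotes the $L^{2p}$ norms for $\E\,\tr$ and $\tr\otimes\tau$. Now use $\norm{M_{\mathrm{free}}}_{2p}\le\norm{M_{\mathrm{free}}}_\infty$, the elementary bound $(a^4+b)^{1/4}\le a+b/(4a^3)$, and the lower bound $\norm{M_{\mathrm{free}}}_\infty\ge\sigma(M)$ from \eqref{eq:m-free}:
\[
\E\norm{M}_\infty\le d^{1/2p}\Bigl(\norm{M_{\mathrm{free}}}_\infty^4+Cp^3\tilde v(M)^4\Bigr)^{1/4}\le d^{1/2p}\Bigl(\norm{M_{\mathrm{free}}}_\infty+\frac{Cp^3\tilde v(M)^4}{\sigma(M)^3}\Bigr).
\]
Choose $p\asymp(\log d)^{1/4}\sigma(M)/\tilde v(M)$. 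This is $\ge\log d$ when $\tilde v(M)(\log d)^{3/4}\le\sigma(M)$; in the opposite regime your constant-loss bound already suffices. With this choice, both $(d^{1/2p}-1)\norm{M_{\mathrm{free}}}_\infty$ and the second term are $O(\tilde v(M)(\log d)^{3/4})$.

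For the way the paper actually uses the proposition, your constant-factor version would already be enough, since the proof of \cref{th:unital-improved} only needs $\norm{M}_\infty=O(\sqrt{m})$. But it is not the statement as written.
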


Applying the proposition to the particular case of unital channels yields the following.

\begin{lemma}\label[lemma]{lem:vanhandel-general}
    Let $\Phi$ be a unital channel with Kraus representation $\{K_j\}_{j=1}^{n}$ and let
\begin{equation}
    M = \sum\limits_{j=1}^{n} g_j K_j,
\end{equation}
where $g_j$ are i.i.d. standard normal variables. Then we have
\begin{equation}\label{eq:ugly-equation}
   \Pp\left( \norm{M}_{\infty} > 2 + C s_{1}(\Phi) + C s_2(\Phi) \right) = O\left( \frac{1}{d^{2 + \delta}} \right),
\end{equation}
where 
\begin{align}
    & s_1(\Phi) = \left(\norm{J(\Phi)}_{\infty}  \log^3 d\right)^{1/4}, \\
    & s_2(\Phi) = \min\left\{ 1, \norm{J(\Phi)}_{\infty}^{1/2} \right\} (\log d)^{1/2}
\end{align}
and $C>0, \delta > 0$ are universal constants.
\end{lemma}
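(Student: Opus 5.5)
The plan is to apply \cref{prop:bbvh} directly, after evaluating each parameter for a unital channel. Since $\Phi$ is trace preserving and unital, both $\sum_j K_j^\dagger K_j$ and $\sum_j K_j K_j^\dagger$ equal $\id_d$. Hence $\sigma(M)=1$ by \eqref{eq:sigma}, and by \eqref{eq:m-free} we get $\norm{M_{\mathrm{free}}}_\infty \le 1+1 = 2$. By \eqref{eq:vm2}, $v(M)^2=\norm{J(\Phi)}_\infty$. Then \eqref{eq:v-tilde} gives $\tilde v(M) = \norm{J(\Phi)}_\infty^{1/4}$, and \eqref{eq:sigma-ast} gives $\sigma_\ast(M)\le \min\{1,\norm{J(\Phi)}_\infty^{1/2}\}$.

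Next, substitute into \eqref{eq:vH_mfree}. The middle term becomes
\begin{equation*}
C\tilde v(M)(\log d)^{3/4} = C\left(\norm{J(\Phi)}_\infty \log^3 d\right)^{1/4} = C s_1(\Phi).
\end{equation*}
For the last term, choose $t = \sqrt{(2+\delta)\log d}$ for a fixed $\delta>0$, say $\delta=1$. Then $e^{-t^2} = d^{-(2+\delta)}$ and
\begin{equation*}
C\sigma_\ast(M)\, t \le C\sqrt{2+\delta}\,\min\{1,\norm{J(\Phi)}_\infty^{1/2}\}(\log d)^{1/2} = C' s_2(\Phi).
\end{equation*}
The thresholds can only be enlarged by these upper bounds, because the event $\{\norm{M}_\infty > a\}$ shrinks as $a$ increases. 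So the event $\{\norm{M}_\infty > 2 + C' s_1 + C' s_2\}$ is contained in the event bounded in \cref{prop:bbvh}, where $C'=C\sqrt{2+\delta}\ge C$. This gives probability at most $d^{-(2+\delta)}$. Renaming $C'$ as $C$ yields \eqref{eq:ugly-equation}.

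There is no serious obstacle; the argument is bookkeeping. The only points needing care are, first, using monotonicity correctly when replacing $\norm{M_{\mathrm{free}}}_\infty$ and $\sigma_\ast(M)$ by their upper bounds. Second, \cref{prop:bbvh} must apply to matrices $K_j$ that need not be Hermitian. In \cite{vanHandel} the Gaussian model allows general (non-self-adjoint) coefficient matrices, with $\sigma$ defined using both $\sum K^\dagger K$ and $\sum KK^\dagger$ as in \eqref{eq:sigma}. If only the self-adjoint version were available, one would apply it to the Hermitian dilation $\begin{pmatrix}0 & M\\ M^\dagger & 0\end{pmatrix}$ in dimension $2d$. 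That dilation has the same norm, the same $\sigma$, and $v$ unchanged up to a constant factor, which changes only $\log d$ to $\log 2d$ and the absolute constants.
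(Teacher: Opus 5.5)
Your proof is correct and follows the paper's argument: you plug $\sigma(M)=1$, $\norm{M_{\mathrm{free}}}_{\infty}\le 2$, $\tilde v(M)=\norm{J(\Phi)}_{\infty}^{1/4}$ and $\sigma_\ast(M)\le\min\{1,\norm{J(\Phi)}_{\infty}^{1/2}\}$ into \cref{prop:bbvh} with $t$ of order $(\log d)^{1/2}$. The only differences are cosmetic. The paper takes $t=(2+\delta)(\log d)^{1/2}$, which gives the even smaller tail $d^{-(2+\delta)^2}$, whereas your $t=\sqrt{(2+\delta)\log d}$ gives exactly $d^{-(2+\delta)}$; your monotonicity bookkeeping and Hermitian-dilation remark are extra care the paper leaves implicit.
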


\begin{proof}
    By assumption of unitality we have $\sigma(M)=1$ (recall \eqref{eq:sigma}). Thus by \eqref{eq:sigma-ast} and \eqref{eq:v-tilde} we get $\sigma_{\ast}(M) \leq \min\{1, v(M)\}$ and $\tilde{v}(M) = v(M)^{1/2}$. By \eqref{eq:m-free} we obtain $\norm{M_{\mathrm{free}}}_{\infty} \leq 2$, which accounts for the first term on the right hand side under the probability in \eqref{eq:ugly-equation}. By \eqref{eq:vm2} we have $v(M) = \norm{J(\Phi)}_{\infty}^{1/2}$, so  the second term in the tail bound is $C s_1 (\Phi)$ and to obtain the third term it is enough to take $t = (2+\delta) (\log d)^{1/2}$.
\end{proof}



With this lemma we can now prove \cref{th:unital-improved}.


\begin{proof}[Proof of \cref{th:unital-improved}]
    Assume without loss of generality that $n \leq d^2$. Let $U = (u_{ij})_{i,j=1}^{n}$ be an $n$-dimensional Haar random unitary and define as in \eqref{eq:li-def}
    \begin{equation}
        L_i = \sum\limits_{j=1}^{n} u_{ij} K_j, \quad i = 1, \ldots, n.
    \end{equation}
    Similarly as in the proof of \cref{th:prob-succ-groups-k}, we will show that with high probability over the choice of $U$ we have
    \begin{equation}
         \sum_{i=1}^{n} \norm{L^{\dagger}_i L_i}_{\infty} \leq C \max \left\{ \norm{J(\Phi)}^{1/2}_{\infty} (\log d)^{3/2}, 1 \right\},
    \end{equation}
    for some universal constant $C>0$, which by \cref{cor:main-corollary} will give us the desired lower bound on $q_{\mathrm{succ}}$.

    Let us for conciseness write $m= \max \left\{ \norm{J(\Phi)}^{1/2}_{\infty} (\log d)^{3/2}, 1 \right\}$. By a union bound over $i=1, \ldots, n$ it is enough to show that for each $i$ we have
    \begin{equation}
        \Pp\left( \norm{L^{\dagger}_i L_i}_{\infty} > C \frac{m}{n} \right) = o\left( \frac{1}{n} \right).
    \end{equation}
    Since $\norm{L^{\dagger}_i L_i}_{\infty} = \norm{L_i}_{\infty}^2$, clearly it is enough to bound
    \begin{equation}\label{eq:bound-on-li}
    \Pp\left( \norm{L_i}_{\infty} > C \sqrt{\frac{m}{n}} \right),
    \end{equation}
    with a possibly different value of $C$. Note that the marginal distribution of $L_i$ is the same as the distribution of $L = \sum\limits_{j=1}^{n} z_j K_j$, where $z = (z_1, \ldots, z_n)$ is a uniformly random vector on the unit sphere in $\C^n$. To bound $\norm{L}_{\infty}$, we replace $z$ by a vector of Gaussian variables divided by its norm and then use the fact that the norm of a Gaussian vector is tightly concentrated.
    
    Let thus
    \begin{equation}
        M = \sum\limits_{j=1}^{n} g_j K_j,
    \end{equation}
    where $g = (g_1, \ldots, g_n)$ is a vector of i.i.d. standard complex normal variables. An application of \cref{lem:vanhandel-general} gives 
    \begin{equation}\label{eq:ugly-equation2}
        \Pp\left( \norm{M}_{\infty} > 2 + C \norm{J(\Phi)}_{\infty}^{1/4} (\log d)^{3/4} + C \left( \min\{ 1, \norm{J(\Phi)}_{\infty}^{1/2} \} \right) (\log d)^{1/2}\right) = O\left( \frac{1}{d^{2+\delta}} \right),
    \end{equation}
    for universal constants $C>0$, $\delta >0$ (note that the lemma concerns real normal variables instead of complex ones, but this will only affect constants). Since we assume $\norm{J(\Phi)}_{\infty} = O\left(\frac{1}{(\log d)^3}\right)$, the first, constant term in the sum dominates. Recalling that $m= \max \left\{ \norm{J(\Phi)}^{1/2}_{\infty} (\log d)^{3/2}, 1 \right\}$, we obtain that for some universal constants $C, \delta > 0$ and all sufficiently large $d$
    \begin{equation}\label{eq:bound-on-m}
        \Pp\left( \norm{M}_{\infty} > C \sqrt{m} \right) = O\left( \frac{1}{d^{2+\delta}} \right).
    \end{equation}
    
    It remains to pass from estimating $\norm{M}_{\infty}$ to estimating $\norm{L}_{\infty}$. Note that the random vector $\left( \frac{g_1}{\norm{g}}, \ldots, \frac{g_n}{\norm{g}} \right)$, where $\norm{\cdot}$ is the Euclidean norm, is uniformly distributed on the unit sphere in $\C^n$. Thus $L$ has the same distribution as $\sum\limits_{j=1}^{n} \frac{g_j}{\norm{g}} K_j $ and we can write
    \begin{equation}
        \Pp\left( \norm{L}_{\infty} > C \sqrt{\frac{m}{n}} \right) = \Pp\left( \norm{\sum\limits_{j=1}^{n} g_j K_j }_{\infty} > C \norm{g} \cdot \sqrt{\frac{m}{n}} \right) = \Pp\left( \norm{M }_{\infty} > C \norm{g} \cdot \sqrt{\frac{m}{n}} \right).
    \end{equation}
It is standard that $\norm{g}$ is tightly concentrated around $\sqrt{n}$, so by conditioning on, say, $\left\{ \norm{g} \geq \frac{\sqrt{n}}{2} \right\}$ we can estimate
\begin{align}
     \Pp\left( \left\Vert M \right\Vert_{\infty} \geq \Vert g \Vert \cdot C \sqrt{\frac{m}{n}} \right) & \leq
\Pp \left( \left\Vert M \right\Vert_{\infty} \geq \frac{\sqrt{n}}{2} \cdot C \sqrt{\frac{m}{n}}\right) + \mathbb{P}\left( \Vert g \Vert < \frac{\sqrt{n}}{2} 
\right) \nonumber \\
& = \Pp \left( \left\Vert M \right\Vert_{\infty} \geq \frac{C}{2} \sqrt{m} \right) + \Pp\left( \Vert g \Vert < \frac{\sqrt{n}}{2} \right).
\end{align}
To bound the first term we use \eqref{eq:bound-on-m}, after possibly adjusting constants, and for the second term we use classical concentration estimates on $\norm{g}$. Altogether this proves \eqref{eq:bound-on-li} and by the previous remarks finishes the proof of the theorem.
    
\end{proof}

\begin{remark}
    For the sake of simplicity we have stated the bound in \cref{th:unital-improved} only in the case of unital channels with sufficiently small norm $\norm{J(\Phi)}_{\infty}$. However, it is straightforward to generalize the bound, for example to non-unital channels, by using \cref{prop:bbvh} with the corresponding values of parameters $\sigma_{\ast}(M)$, $\tilde{v}(M)$. For non-unital channels the success probability obtained will, apart from the norm of the Choi matrix, depend on $ d\norm{\Phi\left( \frac{\id_d}{d} \right)}_{\infty}$ (which is equal to $1$ for unital channels).
\end{remark}

\begin{example}\label{ex:holevo}
    For a nontrivial example where \cref{th:unital-improved} provides an advantage over \cref{cor:unital-1-log-d}, consider the antisymmetric Werner-Holevo channel:
    \begin{equation}
        \Phi_{WH}(X) = \frac{\tr(X)\id_d - X^T}{d-1}.
    \end{equation}
    The channel has the Kraus representation $K_{ij} = \frac{1}{\sqrt{2d-2}}\left(\ketbra{i}{j} - \ketbra{j}{i}\right), i,j=1,\dots,d$ and its (unnormalized) Choi state is equal to $J(\Phi_{WH}) = \frac{2}{d-1}\Pi_{\mathrm{asym}}$, where $\Pi_{\mathrm{asym}}$ is the projector onto the antisymmetric subspace of $\C^d\otimes \C^d$. The channel is unital, $\norm{J(\Phi_{WH})}_{\infty} = \frac{2}{d-1}$ and the diamond distance between $\Phi_{WH}$ and the completely depolarizing channel is easily bounded from below by $1$. If $d$ is odd, the channel is not mixed unitary (\cite[Exercise 4.2]{watrous}). \cref{th:unital-improved} implies that $\Phi_{WH}$ can be simulated with one additional qubit with constant success probability, whereas \cref{cor:unital-1-log-d} for general unital channels would give only success probability $\Omega(\frac{1}{\log d})$. 
\end{example}

\section{The epsilon-net channel}\label{sec:epsilon-net}

We will now construct a channel which saturates the bound on success probability from \cref{th:prob-succ-groups-k} from the previous section -- namely, a channel which cannot be simulated using ancilla dimension $k$ with probability larger than $O(\frac{k}{\log d})$. The channel is a commutative (Schur) channel obtained from an $\varepsilon$-net on the unit sphere and to our knowledge is new.

Recall that an $\varepsilon$-net $S$ of the unit sphere in $\C^n$ is a finite set of unit vectors $\{v_i\}_{i=1}^{\ell}$ such that for every unit vector $x$ there exists $i$ such that $\Vert x - v_i \Vert < \varepsilon$. It is standard (see e.g. \cite{vershynin}) that there exists an $\varepsilon$-net on the unit sphere in $\C^n$ of size at most $(1 + \frac{2}{\varepsilon})^{2n}$.

\begin{definition}
Let $\varepsilon, a > 0$ and let $m = \lceil a \log d \rceil$. Let $S$ be an $\varepsilon$-net on the unit sphere of $\C^m$ of size at most $(1 + \frac{2}{\varepsilon})^{2m}$. Assume that $\varepsilon, a$ are such that $\vert S \vert \leq d$ -- it suffices to take e.g. $\varepsilon=1/2$ and $a=1/4$. Let $A$ be the $d\times m$ matrix whose rows are vectors from $S$, repeated arbitrarily if needed so that the number of rows is exactly $d$. We define the {\normalfont  epsilon-net channel} associated to $S$ to be the channel $\phinet$ with Kraus operators $\{K_i\}_{i=1}^{m}$ given by

\begin{equation}\label{eq:diagonal-channel}
K_i := \sum_{j=1}^{d}A_{ji} \ketbra{j}{j}.
\end{equation}
\end{definition}

Note that all Kraus operators of the channel $\phinet$ commute. Such channels are known as commutative or Schur channels and every such channel acts as entrywise multiplication by a correlation matrix $C$, i.e. a positive semidefinite matrix with ones on the diagonal, $\Phi(X) = C \circ X$ (\cite[Theorem 4.19]{watrous}). Any such channel is automatically unital. We collect technical properties of $\phinet$ that will be useful in the proof of the upper bound on the success probability.

\begin{remark}\label{rm:properties-of-C}
    For every channel of the form \eqref{eq:diagonal-channel} the correlation matrix is equal to $C=AA^{\dagger}$. For $\phinet$ the rows of $A$ form an $\varepsilon$-net, so $A$ has full column rank, equal to $m$. It follows that $A^{\dagger}A$ is invertible. The projector $P_A$ onto the column space of $A$, which is equal to the column space of $C$, is equal to
    \begin{equation}\label{eq:pa-projector}
        P_A = A(A^{\dagger}A)^{-1} A^{\dagger}.
    \end{equation}
    The Moore-Penrose inverse of $C$, denoted by $C^{+}$, is a matrix that satisfies $CC^{+}=C^{+}C=P_A$ (i.e. inverts $C$ on its range and is zero otherwise) and can be expressed as
    \begin{equation}\label{eq:pseudoinverse}
        C^{+} = A(A^{\dagger}A)^{-2}A^{\dagger}.
    \end{equation}
\end{remark}

We will prove that the channel $\phinet$ cannot be simulated with ancilla of dimension $k$ with success probability better than $O(\frac{k}{\log d})$ and thus saturates the bounds from \cref{sec:logd-simulation}. To this end, we establish the following lemma, which bounds the maximal success probability of simulation by a class of admissible channels using a linear witness. The bound is essentially analogous to bounds appearing in the context of general resource theories \cite{regula}.

Let $\mathcal{N} \subseteq \channels(\C^d, \C^d\otimes \C^2)$ be any set of channels with an additional output qubit and let $S_{\Phi} \subseteq \channels(\C^d, \C^d\otimes \C^2)$ be the set of channels that simulate $\Phi$ with some $q\in[0,1]$ in the sense of \cref{def:postselection}. For a channel $\Phi' \in S_{\Phi}$ we let $q(\Phi,\Phi')$ denote the maximal success probability of simulation and we let
\begin{equation}
q(\Phi, \mathcal{N}) := \sup_{\Phi' \in \mathcal{N} \cap S_{\Phi}}q(\Phi,\Phi').
\end{equation}
Finally, for a channel $\Phi' \in \channels(\C^d, \C^d\otimes \C^2)$ let $\Phi'\vert_0$ denote the subchannel on $\operators(\C^d)$ (a completely positive trace nonincreasing map) obtained from $\Phi'$ as $\Phi'\vert_0 := (\id_d \otimes \bra{0})\Phi'(\id_d \otimes \ket{0})$.


\begin{lemma}\label{lm:upper-bound}
Let $X \in \operators(\C^d \otimes \C^d)$ be a nonnegative operator such that $\tr(J(\Phi)X) \neq 0$. Then for any $\Phi \in C(\C^d)$ and any set of channels $\mathcal{N} \subseteq \channels(\C^d, \C^d\otimes \C^2)$ we have
\begin{equation}\label{eq:upper-bound}
q(\Phi, \mathcal{N}) = \frac{\sup_{\Phi' \in \mathcal{N}\cap S_{\Phi}}\tr(J(\Phi'\vert_0)X)}{\tr(J(\Phi)X)}.
\end{equation}
Moreover, every $\Phi' \in \mathcal{N}\cap S_{\Phi}$ satisfies $\supp J(\Phi'\vert_0) \subseteq \supp J(\Phi)$.
\end{lemma}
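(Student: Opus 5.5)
The whole argument rests on one observation: for any $\Phi' \in S_{\Phi}$, the subchannel $\Phi'\vert_0$ is just a rescaling of $\Phi$. Write $\Phi'(\rho) = q\,\Phi(\rho)\otimes\ketbra{0}{0} + (1-q)\Phi^{\junk}(\rho)\otimes\ketbra{1}{1}$ as in \eqref{eq:postselection}. Compressing with $\id_d\otimes\bra{0}$ on the left and $\id_d\otimes\ket{0}$ on the right removes the $\ketbra{1}{1}$ term, so $\Phi'\vert_0 = q\,\Phi$ as linear maps on $\operators(\C^d)$.

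First I would check that $q$ is uniquely determined by $\Phi'$. This makes $q(\Phi,\Phi')$ a well-defined quantity rather than a supremum over several admissible values. Indeed, $\tr\Phi'\vert_0(\rho) = q\tr\Phi(\rho) = q$ for every state $\rho$, since $\Phi$ is trace preserving. Hence $q(\Phi,\Phi') = \tr \Phi'\vert_0(\id_d/d)$.

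Next I would pass to Choi matrices. The map $\Psi\mapsto J(\Psi) = (\id_d\otimes\Psi)(\ketbra{\Psi}{\Psi})$ is linear in $\Psi$, so $J(\Phi'\vert_0) = q\,J(\Phi)$. Pairing with $X$ gives $\tr(J(\Phi'\vert_0)X) = q(\Phi,\Phi')\,\tr(J(\Phi)X)$. Because $J(\Phi)\succcurlyeq 0$ and $X\succcurlyeq 0$, we have $\tr(J(\Phi)X)\ge 0$, and it is nonzero by assumption, so it is strictly positive. Dividing through yields $q(\Phi,\Phi') = \tr(J(\Phi'\vert_0)X)/\tr(J(\Phi)X)$ for each $\Phi'\in\mathcal{N}\cap S_{\Phi}$.

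To obtain \eqref{eq:upper-bound}, I would take the supremum over $\Phi'\in\mathcal{N}\cap S_{\Phi}$ on both sides. This is legitimate because the denominator is a fixed positive constant independent of $\Phi'$.

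For the support claim, the identity $J(\Phi'\vert_0) = q\,J(\Phi)$ already suffices. If $q>0$ the two matrices have equal support. If $q=0$ then $J(\Phi'\vert_0)=0$, whose support is trivially contained in $\supp J(\Phi)$.

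I do not expect a genuine obstacle in this argument. The only points needing care are the uniqueness of $q$, handled by the trace computation above, and the strict positivity of the denominator, handled by positivity of $J(\Phi)$ and $X$. The lemma's real force comes later, when one chooses $X$, for instance related to $C^{+}$ for $\phinet$, so that $\tr(J(\Phi'\vert_0)X)$ can be bounded for channels of Kraus rank at most $k$.
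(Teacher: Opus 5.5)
Your proposal is correct and follows the paper's proof: $\Phi'\vert_0 = q\,\Phi$ gives $J(\Phi'\vert_0) = q\,J(\Phi)$ by linearity, and pairing with $X$ then taking the supremum yields both the formula and the support inclusion. Your added checks, that $q$ is uniquely determined by $\Phi'$ and that the denominator $\tr(J(\Phi)X)$ is strictly positive, are details the paper leaves implicit.
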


\begin{proof}
From \cref{def:postselection} we directly obtain $\Phi'\vert_0 = q \Phi$, so also $J(\Phi'\vert_0) = q J(\Phi)$. This immediately implies $\mathrm{supp}J(\Phi'\vert_0) \subseteq \mathrm{supp}J(\Phi)$ (and in fact the supports are equal unless $q=0$). Multiplying both sides by $X$ and taking the trace yields the result.
\end{proof}

Armed with \cref{lm:upper-bound} we are now ready to prove the main upper bound.

\begin{theorem}\label{th:epsilon-net-bound}
Any channel that simulates $\phinet$ with ancilla dimension $k$ can obtain success probability at most $O(\frac{k}{\log d})$.
\end{theorem}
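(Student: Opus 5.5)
The plan is to apply \cref{lm:upper-bound} with a witness built from the Moore--Penrose inverse of the correlation matrix. The denominator $\tr(J(\phinet)X)$ should then equal the rank $m \approx a\log d$, and the numerator should be $O(k)$ for every admissible simulating channel.

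\textbf{Step 1: structure of $J(\phinet)$ and choice of witness.} For a Schur channel $\Phi(X)=C\circ X$, the Choi matrix is supported on $\Span\{\ket{jj}\}_{j=1}^{d}$. Identifying $\ket{jj}$ with $\ket{j}$, it equals $\sum_{j,l}C_{jl}\ketbra{jj}{ll}$, i.e. it is $C$ embedded in this subspace. I take $X=\sum_{j,l}C^{+}_{jl}\ketbra{jj}{ll}$ (up to harmless complex conjugation conventions), which is positive semidefinite. Then $\tr(J(\phinet)X)=\tr(CC^{+})=\tr P_A=m$ by \cref{rm:properties-of-C}.

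\textbf{Step 2: reduce to rank-$k$ pieces.} Let $\Phi'=\sum_\alpha p_\alpha\Phi_\alpha$ simulate $\phinet$, with each $\Phi_\alpha$ implemented by a unitary on an ancilla of dimension $k$. Then $\Phi_\alpha$ has Kraus rank at most $k$, so $\Phi_\alpha\vert_0$ is a trace-nonincreasing CP map with at most $k$ Kraus operators $D_{\alpha,1},\dots,D_{\alpha,k}$. We have $J(\Phi'\vert_0)=\sum_\alpha p_\alpha J(\Phi_\alpha\vert_0)$, and by the support statement in \cref{lm:upper-bound} this lies in $\supp J(\phinet)$. Since each summand is PSD, each $J(\Phi_\alpha\vert_0)$ also has support in $\supp J(\phinet)$, which is the span of the vectorized $K_i$. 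Hence every $D_{\alpha,r}$ is a linear combination of the $K_i$: writing $D_{\alpha,r}=\mathrm{diag}(Aw_{\alpha,r})$ for some $w_{\alpha,r}\in\C^m$, the vectorization is $a_{\alpha,r}=Aw_{\alpha,r}$. By \cref{lm:upper-bound} it suffices to show $\tr(J(\Phi_\alpha\vert_0)X)=O(k)$ for each $\alpha$.

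\textbf{Step 3: compute the witness value.} Using \eqref{eq:pseudoinverse},
\[
a^{\dagger}C^{+}a = w^{\dagger}A^{\dagger}A(A^{\dagger}A)^{-2}A^{\dagger}Aw=\norm{w}^2 .
\]
Therefore $\tr(J(\Phi_\alpha\vert_0)X)=\sum_{r\le k}\norm{w_{\alpha,r}}^2=\tr W_\alpha$, where $W_\alpha:=\sum_r w_{\alpha,r}w_{\alpha,r}^{\dagger}$ is an $m\times m$ PSD matrix of rank at most $k$. So it suffices to show $\norm{W_\alpha}_\infty=O(1)$, since then $\tr W_\alpha\le k\norm{W_\alpha}_\infty$.

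\textbf{Step 4: the $\varepsilon$-net argument (the key step).} Trace non-increase gives $\sum_r D_{\alpha,r}^{\dagger}D_{\alpha,r}\preccurlyeq\id_d$. Reading off the $j$-th diagonal entry yields $\sum_r|(Aw_{\alpha,r})_j|^2\le 1$, i.e. $v^{\dagger}\overline{W_\alpha}v\le 1$ (up to conjugation) for every net point $v\in S$. Now let $x$ be a top unit eigenvector of $\overline{W_\alpha}$, with eigenvalue $\lambda$, and pick $v\in S$ with $\norm{x-v}<\varepsilon$. Because $y\mapsto\norm{\overline{W_\alpha}^{1/2}y}$ is a seminorm, we get $1\ge\norm{\overline{W_\alpha}^{1/2}v}\ge\sqrt{\lambda}(1-\varepsilon)$. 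Hence $\lambda\le(1-\varepsilon)^{-2}=4$ for $\varepsilon=1/2$.

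\textbf{Conclusion and main obstacle.} Combining the steps gives $q\le 4k/m=O(k/\log d)$. The step I expect to need most care is Step 2: justifying that each individual $\Phi_\alpha\vert_0$, and not only the average $\Phi'\vert_0$, has Kraus operators in the span of the diagonal $K_i$. This uses the PSD-sum support argument, together with careful bookkeeping of conjugations in the identification of Choi vectors with diagonal operators. The net bound in Step 4 is the conceptual heart, and it is exactly where the $\varepsilon$-net property of the rows of $A$ is used.
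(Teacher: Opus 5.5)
Your proof is correct and follows the paper's argument essentially step for step. You use the same witness $X=UC^{+}U^{\dagger}$ giving denominator $m$, the same reduction to Kraus-rank-$k$ components supported in $\supp J(\phinet)$, and your $W_\alpha$ is exactly the paper's $T$, whose top eigenvalue is bounded via the diagonal constraint together with the $\varepsilon$-net property of the rows of $A$. The only differences are cosmetic: you read off the diagonal bound from the diagonal Kraus operators rather than from the compressed Choi matrix $M$, and your seminorm triangle inequality yields the constant $(1-\varepsilon)^{-2}$ instead of the paper's $(1-\varepsilon^2/2)^{-2}$, which does not affect the $O(k/\log d)$ conclusion.
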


\begin{proof}
Let $\mathcal{N}_k$ be the set of convex combinations of channels of Kraus rank at most $k$. Consider any $\Phi' \in \mathcal{N}_k \cap S_{\phinet}$ and write $\Phi' = \sum_{\alpha}p_{\alpha}\Phi_{\alpha}'$, where each $\Phi'_{\alpha}$ has Kraus rank at most $k$ (note that $\Phi'_{\alpha}$ separately do not necessarily lie in $S_{\phinet}$). For any $X$ by convexity $\tr(J(\Phi'\vert_0)X)$ in \cref{lm:upper-bound} is bounded from above by $\max_{\alpha}\tr(J(\Phi_{\alpha}'\vert_0)X)$. By \cref{lm:upper-bound} we have $\supp J(\Phi'\vert_0) \subseteq \supp J(\phinet)$, so also $\supp J(\Phi_{\alpha}'\vert_0) \subseteq \supp J(\phinet)$ for each $\alpha$. Therefore, the supremum in \cref{lm:upper-bound} can be bounded by the supremum over $\Phi'$ such that $\Phi'$ has Kraus rank at most $k$ and $\supp J(\Phi'\vert_0) \subseteq \supp J(\phinet)$.

Let $C$ be the correlation matrix corresponding to the channel $\phinet$. Since $\phinet(X) = C \circ X$, we readily compute its Choi matrix $J(\phinet)$:
\begin{equation}
J(\phinet) = \sum_{i,j=1}^{d}C_{ij}\ketbra{i}{j}\otimes \ketbra{i}{j}.
\end{equation}

Note that $J(\phinet)$ is supported on the diagonal subspace $D = \mathrm{span}\{\ket{ii}, i=1,\dots d\}$. It is convenient to identify $D$ with $\mathbb{C}^d$ via the isometry $U: \mathbb{C}^d \to D, U\ket{i}=\ket{ii}$. We then have $J(\phinet) = UCU^{\dagger}$. By letting $X:=UC^{+}U^{\dagger}$, where $C^{+}$ is the Moore-Penrose inverse of $C$ from \cref{rm:properties-of-C}, we immediately get that
\begin{equation}\label{eq:denominaor}
    \tr(J(\phinet)X) = \tr(CC^{+}) = \tr P_A = m.
\end{equation}

To upper bound $\tr(J(\Phi'\vert_0)X)$, note that $J(\Phi'\vert_0)$ has rank equal to at most that of $J(\Phi')$, which is at most $k$. Let $P_D$ be the projector onto the diagonal subspace $D$. Since $X$ is supported on $D$, we have 
\begin{equation}
\tr(J(\Phi'\vert_0)X) = \tr(P_D J(\Phi'\vert_0)P_D X).    
\end{equation}
Again, by using the isometry $U$ we can identify $P_D J(\Phi'\vert_0) P_D$ with a positive semidefinite matrix $M$ such that 
\begin{equation}
    \tr(P_D J(\Phi'\vert_0)P_DX) = \tr(MC^{+}).
\end{equation} It is easily checked that since $\Phi'\vert_0$ is trace nonincreasing, we have $M_{ii} \leq 1$ for each $i=1,\dots,d$. Moreover, since $\supp J(\Phi'\vert_0) \subseteq \supp J(\phinet)$, the matrix $M$ is supported on the column space of $A$. Because of that, we can write $M = A T A^{\dagger}$ for some positive semidefinite matrix $T$ of dimension $m \times m$ -- explicitly we can use \eqref{eq:pa-projector} to write

\begin{equation}
    T= (A^{\dagger}A)^{-1} A^{\dagger} M A (A^{\dagger}A)^{-1}.
\end{equation} 
Therefore by \eqref{eq:pseudoinverse} we have
\begin{equation}
    \tr(MC^{+}) = \tr(ATA^{\dagger} A(A^{\dagger}A)^{-2}A^{\dagger}) =\tr(T),
\end{equation}
so it remains to bound $\tr(T)$.

Let $v$ be the eigenvector of $T$ corresponding to its maximal eigenvalue $\lambda$. Since the rows of $A$ form an $\varepsilon$-net, there exists a row $r_i = A^{\dagger} e_i$, where $e_i$ denotes the $i$-th standard basis vector, such that $\Vert  v - r_i \Vert < \varepsilon$, which quickly implies that $\vert \langle v , r_i \rangle\vert^2 > (1 - \varepsilon^2/2)^2$. By writing $T = \lambda vv^{\ast} + \dots$ and using $M_{ii} \leq 1$ we arrive at
\begin{equation}
    1 \geq e^{\ast}_i M e_i = r^{\ast}_i T r_i \geq \lambda \vert \langle v, r_i \rangle \vert^2,
\end{equation}
from which it follows that
\begin{equation}
    \lambda \leq \frac{1}{(1 - \frac{\varepsilon^2}{2})^2}.
\end{equation}
By assumption about the Kraus rank, $J(\Phi'\vert_0)$ has rank at most $k$ and this bound translates throughout the argument to $T$, so in the end
\begin{equation}
    \tr(T) \leq \mathrm{rank}(T) \Vert T \Vert_{\infty} \leq \frac{k}{(1 - \frac{\varepsilon^2}{2})^2}.
\end{equation}
By combining this with \eqref{eq:denominaor} using \cref{lm:upper-bound} we arrive at
\begin{equation}
    q(\phinet,\mathcal{N}_k) \leq  \frac{k}{(1 - \frac{\varepsilon^2}{2})^2}\cdot\frac{1}{m} \leq \frac{1}{a(1 - \frac{\varepsilon^2}{2})^2} \cdot \frac{k}{\log d}
\end{equation}
as desired. 

\end{proof}

\section{Non-unital channels}\label{sec:non-unital}

The simulation protocol analyzed in previous sections requires that the channel be unital (or close to unital). We will now show that any possible simulation strategy will fail if the channel is strongly non-unital. We then discuss possible extensions of the simulation model that can handle non-unital channels.

For concreteness, let us focus on the class of \emph{measure-and-prepare} channels, consisting of channels $\Phi$ acting as
\begin{equation}
    \Phi(X) = \sum_{i=1}^{n} \tr(M_i X)\sigma_i,
\end{equation}
where $\{ \sigma_i \}_{i=1}^{n}$ is an arbitrary collection of states and $\{M_i\}_{i=1}^{n}$ is an arbitrary POVM, i.e. a collection of operators satisfying $M_i \succcurlyeq 0$ and
\begin{equation}
    \sum_{i=1}^{n}M_i = \id_d.
\end{equation}
Operationally, a measure-and-prepare channel can be applied by first implementing the measurement corresponding to the POVM $\{M_i\}_{i=1}^{n}$ and then, upon obtaining the measurement result $i$, preparing the state $\sigma_i$. 

For a specific example, let us consider the \emph{erasure channel} $\erasure$ acting as
\begin{equation}
    \erasure(X) = \tr(X)\ketbra{0}{0}.
\end{equation}
We now show the erasure channel cannot be simulated with high probability without using a large ancilla. To this end, we prove more generally that the class of channels simulable with non-negligible probability with small ancilla contains only channels close to being unital, as measured by the operator norm of $\Phi\left( \frac{\id_d}{d} \right)$ being small.

\begin{proposition}
    For any channel $\Phi$, the success probability of simulating it with ancilla of dimension $k$ is at most
    \begin{equation}
                q \leq \frac{k}{d \norm{\Phi\left( \frac{\id_d}{d} \right)}_{\infty}}.
    \end{equation}
\end{proposition}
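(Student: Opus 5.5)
The plan is to apply the conditioned-on-success identity $\Phi'\vert_0 = q\Phi$ from \cref{def:postselection} to the maximally mixed input, and to bound the operator norm of the output using the Kraus rank of the components. Suppose $\Phi' = \sum_\alpha p_\alpha \Phi'_\alpha$ simulates $\Phi$ with success probability $q$, where each $\Phi'_\alpha$ is implemented with ancilla of dimension at most $k$. Then each $\Phi'_\alpha$ has Kraus rank at most $k$. Evaluating at $\id_d/d$ gives
\begin{equation}
    q\,\Phi\left(\frac{\id_d}{d}\right) = \sum_\alpha p_\alpha \Phi'_\alpha\vert_0\left(\frac{\id_d}{d}\right).
\end{equation}
By the triangle inequality and convexity of the operator norm, it then suffices to show that for every channel $\Psi$ of Kraus rank at most $k$,
\begin{equation}
    \norm{\Psi\vert_0\left(\tfrac{\id_d}{d}\right)}_{\infty} \leq \frac{k}{d}.
\end{equation}
Dividing by $\norm{\Phi(\id_d/d)}_\infty$ then yields the claim.

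For the key step, write the Kraus operators of $\Psi$ as $\{L_j\}_{j=1}^{k}$ and set $\widetilde{L}_j := (\id_d\otimes\bra{0})L_j : \C^d\to\C^d$. These form a Kraus family for the subchannel $\Psi\vert_0$, so $\sum_j \widetilde L_j^\dagger \widetilde L_j \preccurlyeq \id_d$. In particular each term satisfies $\norm{\widetilde L_j}_\infty \leq 1$. Then
\begin{equation}
    \Psi\vert_0\left(\tfrac{\id_d}{d}\right) = \frac{1}{d}\sum_{j=1}^{k}\widetilde L_j\widetilde L_j^\dagger,
\end{equation}
and each summand has norm $\norm{\widetilde L_j}_\infty^2\leq 1$. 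The triangle inequality gives the bound $k/d$.

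I expect the only subtle point to be justifying that \emph{ancilla dimension $k$} means Kraus rank at most $k$ for each $\Phi'_\alpha$, including the flag qubit in the output. Any implementation $\tr_E(U(\rho\otimes\ketbra{0}{0})U^\dagger)$ with environment $\C^k$ yields exactly $k$ Kraus operators (the blocks $(\id\otimes\bra{e})U(\id\otimes\ket{0})$), which is the convention used in the proof of \cref{th:epsilon-net-bound} via the class $\mathcal{N}_k$. One can also state the argument as the rank bound: $J(\Psi\vert_0)$ has rank at most $k$ and diagonal-block constraints. I will use the direct Kraus computation above, since it is shorter.

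For the erasure channel, $\norm{\erasure(\id_d/d)}_\infty = 1$, so the proposition gives $q\leq k/d$. Constant success probability therefore requires ancilla dimension linear in $d$.
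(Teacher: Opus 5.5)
Your proof is correct and is essentially the paper's argument: evaluate at $\id_d/d$, use convexity to pass to a single component of Kraus rank at most $k$, and bound $\frac{1}{d}\norm{\sum_j K_jK_j^{\dagger}}_{\infty}\leq k/d$ via $\norm{K_jK_j^{\dagger}}_{\infty}=\norm{K_j^{\dagger}K_j}_{\infty}\leq 1$. The only cosmetic difference is that you first compress to the flag-$0$ block using the identity $\Phi'\vert_0=q\Phi$, whereas the paper works on the full output with the operator inequality $\Phi'(\id_d/d)\succcurlyeq q\,\Phi(\id_d/d)\otimes\ketbra{0}{0}$.
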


\begin{proof}
      Suppose that a channel $\Phi'$ simulates $\Phi$ with success probability $q$, which by \eqref{eq:postselection}  means that for any input state $\rho$
    \begin{equation}
        \Phi'(\rho) = q \Phi(\rho) \otimes \ketbra{0}{0}+ (1-q)\Phi^{\junk}(\rho) \otimes \ketbra{1}{1}.
    \end{equation}
    Let us write $\Phi' = \sum_i p_i \Phi_i$, where each $\Phi_i$ has Kraus rank at most $k$. By putting $\rho=\frac{\id_d}{d}$ and using positivity we obtain
    \begin{equation}
        \Phi'\left(\frac{\id_d}{d}\right) = q \Phi\left(\frac{\id_d}{d}\right)\otimes \ketbra{0}{0}+ (1-q)\Phi^{\junk}\left(\frac{\id_d}{d}\right) \otimes \ketbra{1}{1} \geq q \Phi\left(\frac{\id_d}{d}\right) \otimes \ketbra{0}{0}.
    \end{equation}
    Taking operator norms of both sides and using convexity yields
    \begin{equation}
\max_i\left\{ \norm{\Phi_i\left(\frac{\id_d}{d}\right)}_{\infty}\right \} \geq q \norm{\Phi\left( \frac{\id_d}{d} \right)}_{\infty}.    
\end{equation}
    Letting $\{ K_j \}_{j=1}^{k}$ be the Kraus representation of the channel $\Phi_i$ for which the maximum is attained, we get
    \begin{equation}
        \norm{\Phi_i\left(\frac{\id_d}{d}\right)}_{\infty}=
        \frac{1}{d}\norm{\sum_{j=1}^{k} K_j K^{\dagger}_j }_{\infty} \leq \frac{1}{d}\sum_{j=1}^{k}\norm{K_jK^{\dagger}_j}_{\infty} \leq \frac{k}{d},
    \end{equation}
    where to prove the last inequality we used $\norm{K_jK^{\dagger}_j}_{\infty} = \norm{K^{\dagger}_j K_j}_{\infty}$ and $\norm{K^{\dagger}_j K_j}_{\infty} \leq 1$ because of \eqref{eq:channel-condition}. Taken together these inequalities imply that
    \begin{equation}
        q \leq \frac{k}{d \norm{\Phi\left( \frac{\id_d}{d} \right)}_{\infty}}
    \end{equation}
    as desired.
\end{proof}

For the erasure channel we have $\erasure\left( \frac{\id_d}{d} \right) = \ketbra{0}{0}$, so we immediately obtain:

\begin{corollary}
    The erasure channel $\erasure$ cannot be simulated with ancilla of dimension $k$ with probability higher than $\frac{k}{d}$.
\end{corollary}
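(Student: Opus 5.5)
The plan is to apply the preceding proposition directly to $\Phi = \erasure$, so the whole argument reduces to computing $\norm{\erasure\left(\frac{\id_d}{d}\right)}_{\infty}$. First, $\erasure(X) = \tr(X)\ketbra{0}{0}$ and $\tr\left(\frac{\id_d}{d}\right)=1$, which give $\erasure\left(\frac{\id_d}{d}\right) = \ketbra{0}{0}$. This is a rank-one projector, so its operator norm equals $1$.

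Substituting into the bound of the proposition gives
\begin{equation}
    q \leq \frac{k}{d \cdot \norm{\ketbra{0}{0}}_{\infty}} = \frac{k}{d}.
\end{equation}
This holds for every simulating channel $\Phi' = \sum_\alpha p_\alpha \Phi_\alpha$ whose components $\Phi_\alpha$ have Kraus rank at most $k$. Ancilla dimension $k$ means exactly that Kraus rank is at most $k$, and this is the hypothesis the proposition uses.

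There is no real obstacle here. The only step to check is that the proposition applies verbatim: the erasure channel maps $\C^d$ to $\C^d$, with $\ketbra{0}{0}$ viewed as a state on $\C^d$. We also note the bound is trivial when $k \geq d$, which is consistent, since then the bound allows $q$ up to $1$.
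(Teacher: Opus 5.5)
Your proposal is correct and matches the paper's own argument: the corollary follows by substituting $\erasure\left(\frac{\id_d}{d}\right) = \ketbra{0}{0}$, which has operator norm $1$, into the preceding proposition's bound $q \leq \frac{k}{d \norm{\Phi\left(\frac{\id_d}{d}\right)}_{\infty}}$. You read ancilla dimension $k$ as Kraus rank at most $k$ for each component $\Phi_\alpha$, which is the same convention the proposition's proof uses.
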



Thus, in order to simulate non-unital channels with small ancilla and high success probability we need to expand the allowed class of operations. The erasure channel $\erasure$ can be implemented trivially with no ancilla if we allow projective measurements and adaptive unitary operations based on measurement results. It suffices to first perform the projective measurement $\{\ketbra{i}{i}\}_{i=0}^{d-1}$ and upon obtaining the result $i$ perform any unitary operation $U_i$ such that $U_i\ket{i} = \ket{0}$. In fact, using projective measurements with adaptive operations together with known results about simulating POVMs with just one additional qubit we can simulate any measure-and-prepare channel with one additional qubit and constant success probability.

\begin{proposition}\label{prop:MAP-simulation}
    Any measure-and-prepare channel can be simulated with one additional qubit with success probability $\Omega(1)$ if we allow projective measurements and adaptive operations.
\end{proposition}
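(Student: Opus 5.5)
We propose to reduce the statement to the POVM simulation result of \cite{kotowski-oszmaniec}, which says that every POVM $\{M_i\}_{i=1}^{n}$ on $\C^d$ can be simulated with one ancillary qubit, classical randomization, projective measurements and postselection, with constant success probability $q_0=\Omega(1)$ independent of $d$. The simulation is exact conditioned on success: the resulting outcome distribution is exactly $\tr(M_i\rho)$.

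\textbf{Step 1: realize the measurement.} Write $\Phi(X)=\sum_i \tr(M_iX)\sigma_i$. Run the POVM simulation protocol from \cite{kotowski-oszmaniec} for $\{M_i\}$ on the input $\rho$. With probability $q_0$ the protocol succeeds and outputs a label $i$ distributed according to $\tr(M_i\rho)$; otherwise it raises a failure flag.

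\textbf{Step 2: prepare the output state adaptively.} On success with outcome $i$, prepare $\sigma_i$ on the system register.
\begin{enumerate}
    \item Measure the post-measurement system projectively in the computational basis, obtaining some basis state $\ket{j}$.
    \item Apply a unitary mapping $\ket{j}$ to $\ket{0}$, exactly as in the erasure example.
    \item Prepare $\sigma_i$ from $\ket{0}$ without extra ancilla. Write the spectral decomposition $\sigma_i=\sum_l \mu_l\ketbra{\psi_l}{\psi_l}$, sample $l$ with probability $\mu_l$, and apply a unitary $V_l$ with $V_l\ket{0}=\ket{\psi_l}$.
\end{enumerate}
The third item is classical randomization over unitaries. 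On failure, prepare any state on the system and set the flag to $\ket{1}$. Reuse the single ancilla qubit from Step 1 as the flag.

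\textbf{Step 3: verify the output.} The overall map is
\begin{equation}
q_0\sum_i \tr(M_i\rho)\,\sigma_i\otimes\ketbra{0}{0}+(1-q_0)\,\Phi^{\junk}(\rho)\otimes\ketbra{1}{1},
\end{equation}
which has exactly the form \eqref{eq:postselection}.

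\textbf{Main obstacle.} The delicate point is checking that the success probability in Step 1 does not depend on the input $\rho$. This is needed because \cref{def:postselection} requires the same $q$ for every $\rho$. The protocol in \cite{kotowski-oszmaniec} has this property: each randomly chosen component measures a subset of effects rescaled by a common factor, with a complementary ``fail'' effect. The success event then corresponds to the operator $\sum_\alpha p_\alpha\lambda_\alpha\sum_{i\in X_\alpha}M_i=q_0\id_d$. This is the POVM analogue of the computation in \cref{prop:simulation-protocol}.

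We also need the one-qubit implementation in \cite{kotowski-oszmaniec} to fit our extended model, namely projective measurements on system plus one qubit, followed by classical postprocessing. If it instead uses a slightly different resource accounting, we would re-derive the constant-probability simulation directly. For this we would use the partition argument of \cref{prop:simulation-protocol} applied to rank-one refinements of $\{M_i\}$, together with a Haar-random rotation as in \cref{sec:noncomm}.
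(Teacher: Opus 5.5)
Your proof is correct and follows essentially the paper's argument. You invoke the one-qubit, constant-probability POVM simulation of \cite{kotowski-oszmaniec}, whose components are projective measurements on $\C^d\otimes\C^2$ with a failure outcome; input-independence of $q$ is then automatic, because summing $\sum_\beta p_\beta N^{(\beta)}_i = qM_i$ over $i$ forces the averaged failure effect to be $(1-q)\id_d$. You then adaptively prepare $\sigma_i$ by sampling an eigenvector and applying a unitary. The only difference is the reset step: you use a computational-basis measurement (the erasure trick), while the paper reduces to rank-one effects and rotates the known post-measurement vector $\ket{\psi_i^{(\beta)}}$ directly onto $\ket{\phi_{ij}}$. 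Both work.

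Your fallback plan is not needed, and as stated it would not work. A Haar rotation of the effects changes the measurement itself, since outcome labels are not a gauge freedom the way Kraus operators of a channel are. The matrix-concentration partition arguments of \cref{sec:logd-simulation} also lose a logarithmic factor in general.
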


\begin{proof}
Let $\Phi$ be a measure-and-prepare channel given by
\begin{equation}
    \Phi(X) = \sum_{i=1}^{n} \tr(M_i X)\sigma_i,
\end{equation}
where $M = \{M_i\}_{i=1}^{n}$ is a POVM and $\{\sigma_i\}_{i=1}^{n}$ is an arbitrary collection of states. Without loss of generality we can assume that each effect $M_i$ has rank one, i.e., $M_i = \alpha_i \ketbra{\psi_i}{\psi_i}$ for some $\alpha_i > 0$ and vectors $\ket{\psi_i}$. Let
\begin{equation}
    \sigma_i = \sum\limits_{j} \lambda_{ij} \ketbra{\phi_{ij}}{\phi_{ij}},
\end{equation}
where $\lambda_{ij} > 0$ sum up to $1$ and for fixed $i$ vectors $\ket{\phi_{ij}}$ are pairwise orthogonal.

By \cite[Result 1]{kotowski-oszmaniec} the POVM $M$ can be simulated with constant success probability by projective measurements employing only one ancillary qubit. More precisely, there exists a probability distribution $\{p_\beta\}$ and a collection of POVMs $\{N^{(\beta)}\}$ such that 1) $L = \sum_\beta p_\beta N^{(\beta)}$ simulates $M$ with postselection probability at least $q = 1/8$, and 2) each POVM $N^{(\beta)}$ has $d+1$ outcomes (with the outcome $d+1$ interpreted as a flag $\emptyset$ indicating failure of the simulation) and can be implemented by a single projective measurement on $\C^d \otimes \C^2$, i.e., with one ancillary qubit. For the projective measurement implementing $N^{(\beta)}$, we formally let it have $n+1$ outcomes (with the outcome $n+1$ denoting the failure flag), but only $d+1$ of them correspond to nonzero Kraus operators, with the non-failure effects described by $\{ \ketbra{\psi_i^{(\beta)}}{\psi_i^{(\beta)}} \}$ for $i$ in some subset of $\{1, \ldots, n\}$.

This yields the following simulation protocol for $M$:
\begin{enumerate}[1)]
    \item sample $\beta$ from the probability distribution $\{p_\beta\}$,
    \item perform the projective measurement implementing the POVM $N^{(\beta)}$, which requires one ancillary qubit,
    \item upon obtaining result $i=1, \ldots, n$ sample $j$ from the probability distribution $\{\lambda_{ij}\}_j$ and perform any unitary operation $U_{ij}^{(\beta)}$ satisfying $U_{ij}^{(\beta)} \ket{\psi_i^{(\beta)}} = \ket{\phi_{ij}}$; if the measurement result is $n+1$, declare failure.
\end{enumerate}
Since $\sum_\beta p_\beta N^{(\beta)}$ simulates $M$ with postselection probability $q$, we have $\sum_\beta p_\beta N^{(\beta)}_{i} = q M_i$ for $i=1,\ldots,n$, which guarantees that, conditioned on the success of the simulation, we obtain the correct distribution of measurement results and hence the correct output states $\sigma_i$.
\end{proof}

It is known that projective measurements and adaptive operations together with just one ancillary qubit are powerful enough to implement {\it any} channel exactly with $\log d$ rounds. This was first shown in \cite{lloyd} without specific claims on the number of rounds and later \cite{andersson, shen} clarified that $\log d$ rounds are sufficient. Their methods do not make use of classical randomization, so it is natural to ask whether classical randomization can reduce the number of measurement rounds required:

\begin{question}\label{q:adaptive}
    Consider the model of simulation with postselection that allows: 1) unitary operations with ancilla; 2) classical randomization; 3) projective measurements; 4) adaptive unitary operations based on measurement results. Can \textit{every} channel $\Phi$ be simulated in this way with success probability $\Omega(1)$ using one additional qubit and $O(1)$ rounds?
\end{question}

\bibliographystyle{alpha}
\bibliography{bibliography}

\appendix
\crefalias{section}{appendix}

\section{Proofs of technical lemmas}\label{sec:appendix}

\begin{lemma}\label{lem:words}
    Let $A_1, \ldots, A_D$ be operators in dimension $d$ satisfying
    \begin{equation}\label{eq:word-kraus}
    \begin{aligned}
        & \sum\limits_{i=1}^{D} A_{i} A_{i}^{\dagger} \preccurlyeq \id_d, \\
        & \sum\limits_{i=1}^{D} A_{i}^{\dagger} A_{i} \preccurlyeq \id_d. 
    \end{aligned}
    \end{equation}
    For $\ell \geq 1$ let $\pi \in S_\ell$ be any permutation on $\ell$ elements. Then we have
    \begin{equation}\label{eq:word-bound}
    \norm{\sum\limits_{j_1, \ldots, j_{\ell}=1}^{D} A_{j_{1}}^{\dagger} A_{j_{\pi(1)}} \ldots A_{j_{\ell}}^{\dagger} A_{j_{\pi(\ell)}}}_{\infty} \leq 1.
    \end{equation}
\end{lemma}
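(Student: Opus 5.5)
The plan is to realize the sum in \eqref{eq:word-bound} as a composition of $2\ell$ contractions acting on $\C^d \otimes (\C^D)^{\otimes \ell}$. Each summation index $j_m$ is "created" into an auxiliary register the first time it appears and "annihilated" the second time. Since every index appears exactly twice, summing over all indices then happens automatically, and the norm bound follows from submultiplicativity.

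Concretely, I would introduce four basic maps, each with operator norm at most $1$ by \eqref{eq:word-kraus}.
\begin{itemize}
\item The column map $C:\C^d\to\C^d\otimes\C^D$, $C\ket{\psi}=\sum_b A_b\ket{\psi}\otimes\ket{b}$. Here $C^{\dagger}C=\sum_b A_b^{\dagger}A_b\preccurlyeq\id_d$, so $\norm{C}_\infty\le 1$.
\item Its adjoint $C^{\dagger}$, which maps $\ket{\phi}\otimes\ket{a}\mapsto A_a^{\dagger}\ket{\phi}$.
\item The row map $R:\C^d\otimes\C^D\to\C^d$, $\ket{\psi}\otimes\ket{b}\mapsto A_b\ket{\psi}$. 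Here $RR^{\dagger}=\sum_b A_bA_b^{\dagger}\preccurlyeq\id_d$, so $\norm{R}_\infty\le 1$.
\item Its adjoint $R^{\dagger}$, which maps $\ket{\phi}\mapsto\sum_a A_a^{\dagger}\ket{\phi}\otimes\ket{a}$.
\end{itemize}
Each of these acts on the system $\C^d$ and one designated register, with the identity on the remaining registers. Tensoring with an identity preserves the norm bound.

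Next, fix the word. Reading the product $A_{j_1}^{\dagger}A_{j_{\pi(1)}}\cdots A^{\dagger}_{j_\ell}A_{j_{\pi(\ell)}}$ from right to left, index $j_m$ occurs once as a dagger (at slot $m$) and once undaggered (at slot $\pi^{-1}(m)$). I assign register $m$ to index $j_m$ and handle its two occurrences as follows.
\begin{itemize}
\item If the first occurrence (rightmost) is the undaggered $A_{j_m}$, I implement it by $C$ into register $m$. The later dagger $A^{\dagger}_{j_m}$ is then implemented by $C^{\dagger}$ from register $m$.
\item If the first occurrence is the dagger, I implement it by $R^{\dagger}$ into register $m$, and the later undaggered occurrence by $R$.
\end{itemize}
All registers start empty, meaning they are not yet present; I can formally adjoin them only when they are created. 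The composed map $W$ sends $\C^d$ to $\C^d$. Expanding the creations as sums over basis states and pairing each with the corresponding annihilation (which acts via $\bra{a}$ on the register) shows that $W$ equals exactly $\sum_{j_1,\dots,j_\ell}A_{j_1}^{\dagger}A_{j_{\pi(1)}}\cdots A_{j_\ell}^{\dagger}A_{j_{\pi(\ell)}}$. Then $\norm{W}_\infty\le\prod\norm{\cdot}_\infty\le 1$.

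The main point requiring care is this bookkeeping: checking that the creation/annihilation assignment reproduces the ordered product with the correct operators. In particular, $C$ and $R^{\dagger}$ insert $A_b$ and $A_a^{\dagger}$ respectively on the system, while $C^{\dagger}$ and $R$ consume the register and insert $A_a^{\dagger}$ and $A_b$ respectively. One also has to check that both hypotheses in \eqref{eq:word-kraus} are genuinely needed: one for the $C$-type pairs and the other for the $R$-type pairs. Registers not currently in use are simply carried along untouched, so interleaving of different indices causes no difficulty.
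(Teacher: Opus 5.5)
Your proof is correct, but it takes a different route from the paper. The paper does not argue directly. It recognizes the sum as a special case of a general operator-norm bound of Bordenave and Collins (their Theorem 4.1), which controls sums of products indexed by an arbitrary partition of the $2\ell$ slots by a product of per-slot quantities. It then notes that here each of these quantities is at most $1$ by the two hypotheses. You instead prove the needed case from scratch. You factor the sum into $2\ell$ maps $C, C^{\dagger}, R, R^{\dagger}$ (tensored with identities) that open a register at the rightmost occurrence of each index and close it at the leftmost one. Your norm computations are right, namely $\norm{C}_\infty^2=\norm{\sum_b A_b^{\dagger}A_b}_\infty$ and $\norm{R}_\infty^2=\norm{\sum_b A_bA_b^{\dagger}}_\infty$. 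Since every index has its own register, crossings in $\pi$ are indeed harmless. It also makes no difference whether you adjoin registers when they are created or carry all $\ell$ of them from the start in a fixed basis state. The citation buys brevity and generality: arbitrary partitions and slot-dependent operator families, not only pairings of $A$ with $A^{\dagger}$. Your argument is elementary and self-contained, and it makes the bookkeeping transparent. Which hypothesis a pair consumes depends on whether its undaggered or its daggered factor is the rightmost one. That distinction disappears from the final statement only because both sums are bounded by $\id_d$. Your factorization also gives the sharper bound $\norm{\sum_b A_b^{\dagger}A_b}_\infty^{c}\,\norm{\sum_b A_bA_b^{\dagger}}_\infty^{\ell-c}$, where $c$ is the number of $C$-type pairs.
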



\begin{proof}
    This is a special case of a general result, see \cite[Theorem 4.1]{bordenave-collins}. To obtain \eqref{eq:word-bound}, we consider the setup of \cite{bordenave-collins} with $r = 2 \ell$, a partition $S$ of the form $B_i = \{ 2i-1, 2\pi(i) \}$ (where $i=1, \ldots, \ell$ and for simplicity we denote both the permutation and the partition by $\pi$), $m=D$, $k=1$ and operators $X_{i,j}$ of the form $X_{2i-1,j} = A_{j}^{\dagger}$, $X_{2i,j} = A_{j}$ for $i=1, \ldots, \ell$. Then in the notation of \cite{bordenave-collins} we have exactly
    \begin{equation}
        X_\pi = \sum\limits_{\vec{j} \in \intervalint{m}^k} X_{1, \vec{j}} \ldots X_{r, \vec{j}} = \sum\limits_{j_1, \ldots, j_{\ell}=1}^{D} A_{j_{1}}^{\dagger} A_{j_{\pi(1)}} \ldots A_{j_{\ell}}^{\dagger} A_{j_{\pi(\ell)}}
    \end{equation}
        and
        \begin{equation}
            \begin{aligned}
            & Q_{\pi, 2i-1} = \norm{ \sum\limits_{j \in \intervalint{m}} X_{2i-1, j} X_{2i-1, j}^{\ast}}^{1/2}_{\infty} = \norm{ \sum\limits_{j = 1}^{D} A_{j}^{\dagger} A_j}^{1/2}_{\infty}, \\
            & Q_{\pi, 2i} = \norm{ \sum\limits_{j \in \intervalint{m}} X_{2i, j}^{\ast} X_{2i, j}}^{1/2}_{\infty} = \norm{ \sum\limits_{j = 1}^{D} A_{j} A_{j^{\dagger}}}^{1/2}_{\infty}.
            \end{aligned}
        \end{equation}
        Each of the terms $Q_{\pi,i}$ is at most $1$ thanks to the assumption \eqref{eq:word-kraus}, which by \cite[Theorem 4.1]{bordenave-collins} proves the lemma.
\end{proof}

\begin{lemma}\label{lem:mgf-bound}
    Let $A_1, \ldots, A_D$ be as in \cref{lem:words} and let
    \begin{equation}
        L = \sum\limits_{i=1}^{D} z_i A_i,
    \end{equation}
    where $(z_1, \ldots, z_D)$ is distributed uniformly on the unit sphere in $\C^D$. Then for any $\theta \in [0, D)$ we have
    \begin{equation}\label{eq:mgf}
        \E \, e^{\theta L^\dagger L} \preccurlyeq \frac{1}{1 - \frac{\theta}{D}} \id_d.
    \end{equation}
\end{lemma}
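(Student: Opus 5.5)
Expand the matrix exponential as a power series: $\E e^{\theta L^\dagger L} = \sum_{\ell\ge0} \frac{\theta^\ell}{\ell!}\E (L^\dagger L)^\ell$. It then suffices to show the moment bound
\begin{equation*}
\E (L^\dagger L)^\ell \preccurlyeq \frac{\ell!}{D^\ell}\,\id_d\cdot c_\ell, \qquad c_\ell := \frac{D^\ell\,\ell!\,(D-1)!}{(D+\ell-1)!}\le 1 .
\end{equation*}
Summing the simpler consequence $\E (L^\dagger L)^\ell \preccurlyeq \frac{\ell!}{D^\ell}\id_d$ against $\theta^\ell/\ell!$ gives the geometric series $\sum_\ell (\theta/D)^\ell = (1-\theta/D)^{-1}$ for $\theta<D$, which is exactly \eqref{eq:mgf}. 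Monotone convergence justifies exchanging sum and expectation: all terms are PSD, and $\norm{L}_\infty$ is bounded, so the series converges.

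To compute $\E (L^\dagger L)^\ell$, we write
\begin{equation*}
(L^\dagger L)^\ell=\sum_{i_1,j_1,\dots,i_\ell,j_\ell}\bar z_{i_1}z_{j_1}\cdots\bar z_{i_\ell}z_{j_\ell}\,A_{i_1}^\dagger A_{j_1}\cdots A_{i_\ell}^\dagger A_{j_\ell}.
\end{equation*}
We then use the moments of the uniform vector on the complex sphere, which follow from unitary invariance (Weingarten for a single column):
\begin{equation*}
\E\, \bar z_{i_1}\cdots\bar z_{i_\ell} z_{j_1}\cdots z_{j_\ell}=\frac{(D-1)!}{(D+\ell-1)!}\sum_{\pi\in S_\ell}\prod_{s}\delta_{i_s, j_{\pi(s)}}.
\end{equation*}
This identity comes from comparing with i.i.d. complex Gaussians $g$, for which Wick's formula gives $\sum_\pi\prod\delta$, and using the independence of $g/\norm{g}$ from $\norm{g}$ together with $\E\norm{g}^{2\ell}=D(D+1)\cdots(D+\ell-1)$. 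Substituting, we get
\begin{equation*}
\E (L^\dagger L)^\ell=\frac{(D-1)!}{(D+\ell-1)!}\sum_{\pi\in S_\ell} W_\pi,
\end{equation*}
where $W_\pi$ is precisely the word sum in \eqref{eq:word-bound} (after relabelling indices so that $A_{j_s}^\dagger$ is paired with $A_{j_{\pi(s)}}$).

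The main obstacle is that $W_\pi$ is not Hermitian for general $\pi$, so \cref{lem:words} only controls its norm, not its PSD order. I would handle this by working with the Hermitian total. The sum $\sum_\pi W_\pi$ equals $\frac{(D+\ell-1)!}{(D-1)!}\E(L^\dagger L)^\ell$, which is PSD. Hence
\begin{equation*}
\sum_\pi W_\pi\preccurlyeq \norm{\sum_\pi W_\pi}_\infty \id_d\le \sum_\pi\norm{W_\pi}_\infty\,\id_d\le \ell!\,\id_d
\end{equation*}
by the triangle inequality and \cref{lem:words}. This gives
\begin{equation*}
\E(L^\dagger L)^\ell\preccurlyeq \frac{\ell!\,(D-1)!}{(D+\ell-1)!}\,\id_d\preccurlyeq \frac{\ell!}{D^\ell}\,\id_d,
\end{equation*}
since $(D+\ell-1)!/(D-1)!=D(D+1)\cdots(D+\ell-1)\ge D^\ell$. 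Inserting this into the power series completes the proof. The only delicate checks are the precise moment formula on the complex sphere and the index bookkeeping matching $W_\pi$ with the form in \cref{lem:words}.
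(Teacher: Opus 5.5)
Your proof is correct and follows the paper's argument essentially step by step: you derive the moment bound $\E(L^\dagger L)^\ell \preccurlyeq \frac{\ell!}{D(D+1)\cdots(D+\ell-1)}\id_d$ from the complex-sphere moment formula, the triangle inequality and \cref{lem:words}, then sum the exponential series. Your explicit observation that $\sum_\pi W_\pi$ is PSD, so the norm bound yields the order bound, spells out a step the paper leaves implicit, and your Gaussian derivation of the moment formula differs only cosmetically from the paper's symmetric-subspace identity. One small slip: in your first display $c_\ell$ carries a spurious factor $\ell!$ and should be $c_\ell = D^\ell (D-1)!/(D+\ell-1)!$, since as written $c_\ell \le 1$ fails (e.g.\ $D=\ell=2$ gives $4/3$); the bound you actually derive and use later is the correct one.
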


\begin{proof}
    We will prove that for any $\ell \geq 1$ we have the following moment bound:
    \begin{equation}\label{eq:lth-moment}
        \E (L^\dagger L)^\ell \preccurlyeq \frac{\ell!}{D(D+1)\ldots (D+\ell - 1)} \id_d.
    \end{equation}
    The bound \eqref{eq:mgf} will follow, since then
    \begin{equation}
        \begin{aligned}
        \E \, e^{\theta L^\dagger L} = \id_d + \sum\limits_{\ell=1}^{\infty} \frac{\theta^\ell}{\ell!} \E (L^\dagger L)^\ell & \preccurlyeq \id_d + \sum\limits_{\ell=1}^{\infty} \frac{\theta^\ell}{\ell!} \cdot \frac{\ell!}{D(D+1)\ldots (D + \ell-1)} \id _d \\
        & \preccurlyeq \id_d + \sum\limits_{\ell=1}^{\infty} \frac{\theta^\ell}{D^\ell} \id _d = \frac{1}{1 - \frac{\theta}{D}} \id_d.
        \end{aligned}
    \end{equation}
    To show \eqref{eq:lth-moment} we consider the expansion
    \begin{equation}
        \E (L^\dagger L)^\ell = \sum\limits_{\substack{i_1, \ldots, i_\ell, \\ j_1, \ldots, j_\ell = 1}}^{D} \E (z_{i_1}  \ldots z_{i_\ell} \overline{z_{j_1}} \ldots \overline{z_{j_\ell}} ) A_{j_1}^\dagger A_{i_1} \ldots A_{j_\ell}^\dagger A_{i_\ell}.
    \end{equation}
    We have the following formula for the mixed moments:
    \begin{equation}\label{eq:z-mixed-moments}
        \E (z_{i_1}  \ldots z_{i_\ell} \overline{z_{j_1}} \ldots \overline{z_{j_\ell}} ) = \frac{1}{{D(D+1)\ldots (D + \ell-1)}} \sum\limits_{\pi \in S_\ell} \prod\limits_{k=1}^{\ell} \delta_{i_k, j_{\pi(k)}}, 
    \end{equation}
    where $S_\ell$ is the symmetric group on $\ell$ elements. This follows from the standard identity (see \cite[Proposition 6]{harrow})
    \begin{equation}\label{eq:symmetric-subspace}
        \E \ketbra{z}{z}^{\otimes \ell} = \frac{1}{{D(D+1)\ldots (D + \ell-1)}} \sum\limits_{\pi \in S_\ell} P_D(\pi), 
    \end{equation}
    where $\ket{z}$ is uniformly random on the unit sphere in $\C^D$ and $P_D(\pi)$ is given by
    \begin{equation}
        P_D(\pi) = \sum\limits_{i_1, \ldots, i_\ell = 1}^{D} \ketbra{i_{\pi^{-1}(1)}, \ldots, i_{\pi^{-1}(\ell)}}{i_1, \ldots, i_\ell}.
    \end{equation}
    Formula \eqref{eq:z-mixed-moments} follows directly by comparing matrix elements of both sides in \eqref{eq:symmetric-subspace}.

    Thus we can rewrite \eqref{eq:lth-moment} as
    \begin{equation}
        \E (L^\dagger L)^\ell = \frac{1}{{D(D+1)\ldots (D + \ell-1)}} \sum\limits_{\pi \in S_\ell} \sum\limits_{j_1, \ldots, j_\ell = 1}^{D} A_{j_1}^\dagger A_{j_{\pi(1)}} \ldots A_{j_\ell}^\dagger A_{j_{\pi(\ell)}}
    \end{equation}
    and the triangle inequality implies
    \begin{equation}
        \norm{\E (L^\dagger L)^\ell}_{\infty} \leq \frac{1}{{D(D+1)\ldots (D + \ell-1)}} \sum\limits_{\pi \in S_\ell} \norm{\sum\limits_{j_1, \ldots, j_\ell = 1}^{D} A_{j_1}^\dagger A_{j_{\pi(1)}} \ldots A_{j_\ell}^\dagger A_{j_{\pi(\ell)}}}_{\infty}.
    \end{equation}
    By the assumption \eqref{eq:word-kraus}, \cref{lem:words} implies that each norm on the right-hand side is at most $1$, which implies
    \begin{equation}
        \norm{\E (L^\dagger L)^\ell}_{\infty} \leq \frac{1}{{D(D+1)\ldots (D + \ell-1)}} \cdot \ell!
    \end{equation}
    and \eqref{eq:lth-moment} is proved.
\end{proof}


\end{document}